\newtheorem{definition}{Definition}
\newtheorem{theorem}{Theorem}
\begin{document}

%%%%%%%%%%%%%%%%%%%%%%%%%%%%%                         %%%%%%%%%%%%%%%%%%%%%%%%%%%%%%%%%%%%%%
%%%%%%%%%%%%%%%%%%%%%%%%%%%%%    Title and Authors    %%%%%%%%%%%%%%%%%%%%%%%%%%%%%%%%%%%%%%
%%%%%%%%%%%%%%%%%%%%%%%%%%%%%                         %%%%%%%%%%%%%%%%%%%%%%%%%%%%%%%%%%%%%%

\title{A Secure and Privacy-preserving Protocol for\\ Smart Metering Operational Data Collection}

\author{Mustafa~A.~Mustafa,~Sara~Cleemput,~Abdelrahaman~Aly,~and~Aysajan~Abidin

%\author{Aysajan~Abidin,~Abdelrahaman~Aly,~ Sara~Cleemput,~\IEEEmembership{Student Member,~IEEE,}
        %and~Mustafa~A.~Mustafa,~\IEEEmembership{Member,~IEEE,}%
        %Xxx~Xxx,~\IEEEmembership{Xxx,~IEEE,}
        %and~Bart~Preneel%,~\IEEEmembership{Xxx,~IEEE}% <-this % stops a space
\thanks{M.A. Mustafa is with the School of Computer Science, The University of Manchester, UK. e-mail: (mustafa.mustafa@manchester.ac.uk).}% <-this % stops a space
\thanks{S. Cleemput, A. Aly and A. Abidin are with the imec-COSIC research group, KU Leuven, Belgium. e-mail: (\{sara.cleemput, aysajan.abidin, abdelrahaman.aly\}@esat.kuleuven.be).}% <-this % stops a space
}

\maketitle

%%%%%%%%%%%%%%%%%%%%%%%%%%%%%                         %%%%%%%%%%%%%%%%%%%%%%%%%%%%%%%%%%%%%%
%%%%%%%%%%%%%%%%%%%%%%%%%%%%%        Abstract         %%%%%%%%%%%%%%%%%%%%%%%%%%%%%%%%%%%%%%
%%%%%%%%%%%%%%%%%%%%%%%%%%%%%                         %%%%%%%%%%%%%%%%%%%%%%%%%%%%%%%%%%%%%%

\begin{abstract}
Smart grid allows fine-grained smart metering data collection which can improve the efficiency and reliability of the grid. Unfortunately, this vast collection of data also impose risks to users' privacy. In this paper, we propose a novel protocol that allows suppliers and grid operators to collect users' aggregate metering data in a secure and privacy-preserving manner. We use secure multiparty computation to ensure privacy protection. In addition, we propose three different data aggregation algorithms that offer different balances between privacy-protection and performance. Our protocol is designed for a realistic scenario in which the data need to be sent to different parties, such as grid operators and suppliers. Furthermore, it facilitates an accurate calculation of transmission, distribution and grid balancing fees in a privacy-preserving manner. We also present a security analysis and a performance evaluation of our protocol based on existing multiparty computation algorithms. % implemented in C++. %\textcolor{red}{8 pages max}

\end{abstract}

\begin{IEEEkeywords}
Secure Multiparty Computation, Smart Grid, Smart Metering, Renewable Energy Source, Security, Privacy.
\end{IEEEkeywords}

\IEEEpeerreviewmaketitle

%%%%%%%%%%%%%%%%%%%%%%%%%%%%%                         %%%%%%%%%%%%%%%%%%%%%%%%%%%%%%%%%%%%%%
%%%%%%%%%%%%%%%%%%%%%%%%%%%%%      Introduction       %%%%%%%%%%%%%%%%%%%%%%%%%%%%%%%%%%%%%%
%%%%%%%%%%%%%%%%%%%%%%%%%%%%%                         %%%%%%%%%%%%%%%%%%%%%%%%%%%%%%%%%%%%%%

\section{Introduction}
\label{Introduction}

\IEEEPARstart{S}{mart} Grid (SG) is the electrical grid of the future, adding a communication network to the traditional electrical grid  infrastructure. This allows bidirectional communication between the different entities and components of the grid, facilitating automated grid management. The overall aim is to make the electrical grid more reliable and efficient~\cite{Farhangi2010}. This is achieved by automatically collecting fine-grained metering data from Smart Meters (SMs), which replace the traditional electricity meters. These metering data include electricity consumption and production measurements. Electricity production takes place if households own a Distributed Energy Resource (DER), e.g., solar panels. All these data are sent to the grid operators and suppliers at regular intervals, e.g., 15 minutes.

%the most well known example of these are solar panels
%Power quality measurements are, for example, voltage and frequency. 

Access to fine-grained metering data gives entities two main advantages. Firstly, these data allow suppliers to predict their customers' electricity consumption and production more accurately. These patterns are essential to allow the supplier to predict the amount of electricity it needs to buy on the wholesale market for every trading period. Since suppliers pay heavy imbalance fees for every deviation of the actual consumption compared to their prediction, it is crucial for them to obtain accurate consumption and production patterns. Secondly, fine-grained metering data also allow accurate settling of all the fees after each trading period, which is essential to realise local electricity trading markets~\cite{Abidin2016, Abidin2018secure}. Currently, the imbalance fees for the suppliers are calculated proportional to their number of customers in each neighbourhood - i.e., households connected to the same feeder. The current imbalance fee is an estimate. With SM data, accurate settling of fees becomes possible. The same is true for the distribution and transmission fees which suppliers pay to the Distribution Network Operator (DNO) and Transmission System Operator (TSO).

%However, as the substation level is currently the lowest level for which fine-grained consumption data are available, this is the best solution available. 

Unfortunately, fine-grained metering data also have disadvantages: they pose a privacy threat to users. Any entity who has access to individual users' fine-grained metering data can use non-intrusive load monitoring techniques~\cite{hart} to analyse consumption patterns and infer user activities~\cite{Kalogridis:USaPP}. As a simple example of how such a  privacy invasion can lead to adverse consequences, consider an insurance company which increases the insurance fee if they observe from the consumption pattern that their customers do not have a healthy life style. The Netherlands have even abandoned their planned mandatory roll-out of SMs because of the privacy issues\cite{Heck}.

The UK has privacy protection as a requirement for their smart metering architecture~\cite{UK_DCC}. However, their proposed architecture contains a centralised entity, the Data Communications Company (DCC), that collects all metering data and provides a privacy-friendly version of it to authorised entities. Although this might ensure privacy protection against these entities (if the privacy-friendly version is properly generated), it does not protect against the DCC which has access to all users' data. 

There are two main approaches for user privacy protection: anonymisation and data aggregation. Proper anonymisation is difficult to achieve, as de-anonymisation is almost always possible~\cite{Cleemput2018}. Aggregation is a better approach, but the current proposals~\cite{Kursawe2011,Li2010,Garcia2011} still have shortcomings: (i) they are designed for system models in which data are sent to only one entity, thus they are not applicable to current electricity markets, (ii) they do not consider electricity generated by residential DERs and injected to the grid, and (iii) they do not support transmission, distribution and balancing fee calculation. The real challenge is to design solutions that protect users' privacy not only from external entities but also from internal ones, and that are efficient, fault-tolerant and applicable to real-world smart metering architecture.

In this paper we propose a secure and privacy-preserving protocol for collecting metering data. This work extends our previous research~\cite{Mustafa2017} by improving the data aggregation algorithm. Our main contribution is twofold:

%Our protocol is based on Multiparty Computations (MPC) and ensure that only the authorised entities can collect the aggregated data only of users they provide services to.

\begin{itemize}

\item We design a secure and privacy-preserving protocol for collecting operational metering data which is required for calculating distribution, transmission and imbalance fees. Our protocol uses Multiparty Computations (MPC) as the underlying cryptographic primitive and supports three different privacy-friendly data aggregation algorithms. Additionally, it supports realistic system models (with multiple data recipients of aggregates of various subsets of users' metering data); it is fault-tolerant; it is applicable to existing liberalised market models, and it also supports electricity production data generated by users. 

\item We analyse the computational complexity and communication cost of our protocol in a realistic setting based on the UK's smart metering architectue~\cite{UK_DCC}. % to ensure they are both secure and efficient.

\end{itemize}

The remainder of the paper is organised as follows: Section~\ref{Related Work} discusses the related work, Section~\ref{Preliminaries} gives the necessary preliminaries, Section~\ref{Privacy-preserving Smart Metering Protocols} proposes a protocol (and three aggregation algorithms) for secure and privacy-preserving operational metering data collection. Sections~\ref{Security Analysis}~and~\ref{Performance Evaluation} analyse its security and privacy properties, and evaluate its performance, respectively. Finally, Section~\ref{Conclusions} concludes the paper.

%%%%%%%%%%%%%%%%%%%%%%%%%%%%%%%%%%%%%%%%%%%%%%%%%%%%%%%%%%%%%%%%%%%%%%%%%%%%%%%%%%%%%%%%%%%%
%%%%%%%%%%%%%%%%%%%%%%%%%%%%%%%%       Related Work     %%%%%%%%%%%%%%%%%%%%%%%%%%%%%%%%%%%%
%%%%%%%%%%%%%%%%%%%%%%%%%%%%%%%%%%%%%%%%%%%%%%%%%%%%%%%%%%%%%%%%%%%%%%%%%%%%%%%%%%%%%%%%%%%%

\section{Related Work}
\label{Related Work}

Security and privacy concerns in SG have been raised~\cite{Kalogridis:USaPP} and various protocols have already been proposed~\cite{Efthymiou2010, Li2010, Kursawe2011, MUSP, DEP2SA, Danezis2013, Rottondi2013IEEE, Garcia2011, Mustafa2017}. To protect users' privacy, these protocols usually take two approaches: anonymisation or aggregation. Efthymiou and Kalogridis~\cite{Efthymiou2010} proposed that each SM also has an anonymous ID for reporting only operational metering data. However, Cleemput~et~al.~\cite{Cleemput2018} showed that de-anonymisation is possible. 

To achieve privacy-friendly aggregation, Li~et~al.~\cite{Li2010} proposed to use homomorphic encryption. However, their protocol does not protect against active attackers nor facilitate current electricity markets. Mustafa~et~al.~\cite{MUSP, DEP2SA} addressed these limitations by using digital signatures and a selective data aggregation and delivery method. Garcia and Jacobs~\cite{Garcia2011} combined homomorphic encryption with data sharing to allow the data recipient to aggregate the data. The use of homomorphic encryption can protect users' privacy, but it also introduces high computational costs to SMs. To overcome this limitation, Kursawe~et~al.~\cite{Kursawe2011} proposed a lightweight aggregation scheme which requires SMs to mask their data with noise that cancels out when added together. The scheme is computationally efficient, but it requires a complex reinitialization process when adding SMs and does not support flexible aggregation groups. Rahman~et~al.~\cite{Rahman2017} proposed a distributed aggregation of metering data where an initialisation SM adds random value to its metering data before sending the sum to the next SM in the group. In turns, each SM adds their data to the accumulated sum before the value is returned to the initial SM that extracts the beforehand added random value to obtain the aggregate data of all the SMs. However, the data of a SM could easily be learnt if two of the neighbouring SMs in the ring collaborate.

Gope~and~Sikdar~\cite{Gope2018} proposed a privacy-preserving aggregation scheme for billing and demand response management, named EDAS, that uses random values to mask the individual measurements of SMs. However, their scheme only partially protects users' privacy, i.e., the service provider is fully trusted, thus it learns all the users metering data, and the aggregator also learns the aggregate data of users located in the same region. Liu~et~al.~\cite{Liu2018} proposed a data aggregation scheme, named 3PDA, that uses a virtual aggregation area to mask the metering data of individual users. Although 3PDA does not rely on a trusted party to aggregate the data, its system model is simplified (it has only a service provider, aggregator and SMs), thus not practical to deploy in existing grid architecture. Lyu~et~al.~\cite{Lyu2018} proposed a fog-enabled scheme, named PPFA. PPFA uses nodes that are closer to SMs to perform an initial aggregation of the data, and then the service provider uses the aggregate data provided by the fog nodes to extract statistics about the users' metering data. Although the obtained results protect users' privacy, i.e., they are differentially private, such results can not be used for billing and settlements as they are just (close) estimations of the real consumption. 

He~et~al.~\cite{He2016} proposed a privacy-preserving aggregation scheme that also protects against internal attackers. However, their scheme uses bilinear mapping, which is computationally expensive for devices with limited resources. He~et~al.~\cite{he2017efficient, He2017} proposed modified schemes that outperform~\cite{He2016} in terms of computation and communication costs. Knirsch~et~al.~\cite{Knirsch2017} combined the schemes proposed in~\cite{Erkin2015}~and~\cite{Rane2015} to construct a scheme for privacy-friendly fault-tolerant aggregation over multiple sets of SMs. However, the scheme is not efficient, as it uses computationally expensive homomorphic techniques as well as a full round between SMs and the service provider. Li~et~al.~\cite{Li2018} proposed a privacy-preserving multi-subset data aggregation scheme, named PPMA. Their scheme combines two super-increasing sequences (similar to EPPA~\cite{Lu2012}) and Paillier cryptosystem. However, the SMs of each subset are restricted to the ones whose consumption data is within a predefined range, limiting the usefulness of the aggregate data. 

Knirsch~et~al.~\cite{Knirsch2015} use masking to achieve error-resilient data aggregation. Although their scheme supports aggregation over multiple sets of SMs, it adds additional communication costs as SMs communicate between each other to exchange their masking values. Abdallah~and~Shen~\cite{Abdallah2017, Abdallah2018} proposed a lattice-based homomorphic data aggregation scheme, whereas Shen~et~al.~\cite{Shen2017} proposed a cube-data aggregation scheme that allows aggregation of multi-dimensional data to obtain sums of each dimension without revealing users' private information. However, these schemes do not support aggregation of multiple sets of SMs. Borges~and~M\"{u}hlh\"{a}user~\cite{EPPP4SMS} proposed a homomorphic encryption-based protocol, where each SM has two secret keys, and an authorised data recipient is aware of the sums of all the corresponding keys of all the SMs in the network. Each SM encrypts its metering data using its keys and sends the ciphertext directly to the data recipient. The data recipient  aggregates all the ciphertexts and then using the sums of the keys, it can decrypt only the sum of the SMs' metering data. The protocol has lower computational costs at SMs (the costs of encryption), compared to the  original  Paillier  scheme. However, the decryption cost at the data recipient is high. Also, the communication overheads are high, as each SM sends its data to the data recipient, i.e., the aggregation is done at the data recipient (the destination of the data), not in the network. Tonyali~et~al.~\cite{TONYALI2018} proposed mechanisms to reduce communication overheads when somewhat homomorphic encryption is used to perform in-network data aggregation.

Another approach to aggregate data in a privacy-preserving (and efficient) manner is MPC. Danezis~et~al.~\cite{Danezis2013} proposed protocols using secret-sharing based MPC to detect fraud and to extract advanced grid statistics. Rottondi~et~al.~\cite{Rottondi2013IEEE} proposed a novel security architecture for aggregation of metering data. However, their architecture requires additional nodes in the system, i.e., gateways placed at the users' households. 

%\begin{itemize}

   % \item Privacy-preserving smart metering with multiple data Consumers by Rottondi~et~al.~\cite{Rottondi2013ACM}
    
    %\item Distributed Privacy-Preserving Aggregation of Metering Data in Smart Grids by Rottondi~et~al.~\cite{Rottondi2013IEEE}
    
   % \item SEPIA: Privacy-Preserving Aggregation of Multi-Domain Network Events and Statistics by Burkhart~et~al.~\cite{SEPIA}
    
%\end{itemize}

Unlike the aforementioned work, our proposed MPC-based privacy-preserving protocol for operational metering data collection (i) is based on a real smart metering architecture, (ii) is readily applicable to a liberalised electricity market with various stakeholders, (iii) takes into account not only the electricity consumption data, but also electricity injected into the grid by households, and (iv) allows the TSO, DNOs and suppliers to calculate the exact distribution, transmission and balancing fees based on real data rather than on estimates. Furthermore, our protocol is fault-tolerant and it protects users' privacy against (colluding) internal adversaries. 

%--------------------------------------------------------------------------------------------%%%%%%%%%%%%%%%%%%%%%%%%%%%%%%%%%%%%%%%%%%%%%%%%%%%%%%%%%%%%%%%%%%%%%%%%%%%%%%%%%%%%%%%%%%%%
%%%%%%%%%%%%%%%%%%%%%%%%%%%%%%%%       Preliminaries     %%%%%%%%%%%%%%%%%%%%%%%%%%%%%%%%%%%%
%%%%%%%%%%%%%%%%%%%%%%%%%%%%%%%%%%%%%%%%%%%%%%%%%%%%%%%%%%%%%%%%%%%%%%%%%%%%%%%%%%%%%%%%%%%%

\section{Preliminaries}
\label{Preliminaries}

%--------------------------------------------------------------------------------------------

\subsection{System Model}

%--------------------------------------------------------------------------------------------

\begin{figure}[!t]
\centering
\includegraphics[trim= 0 0 0 0,clip=true,width=3.43in]{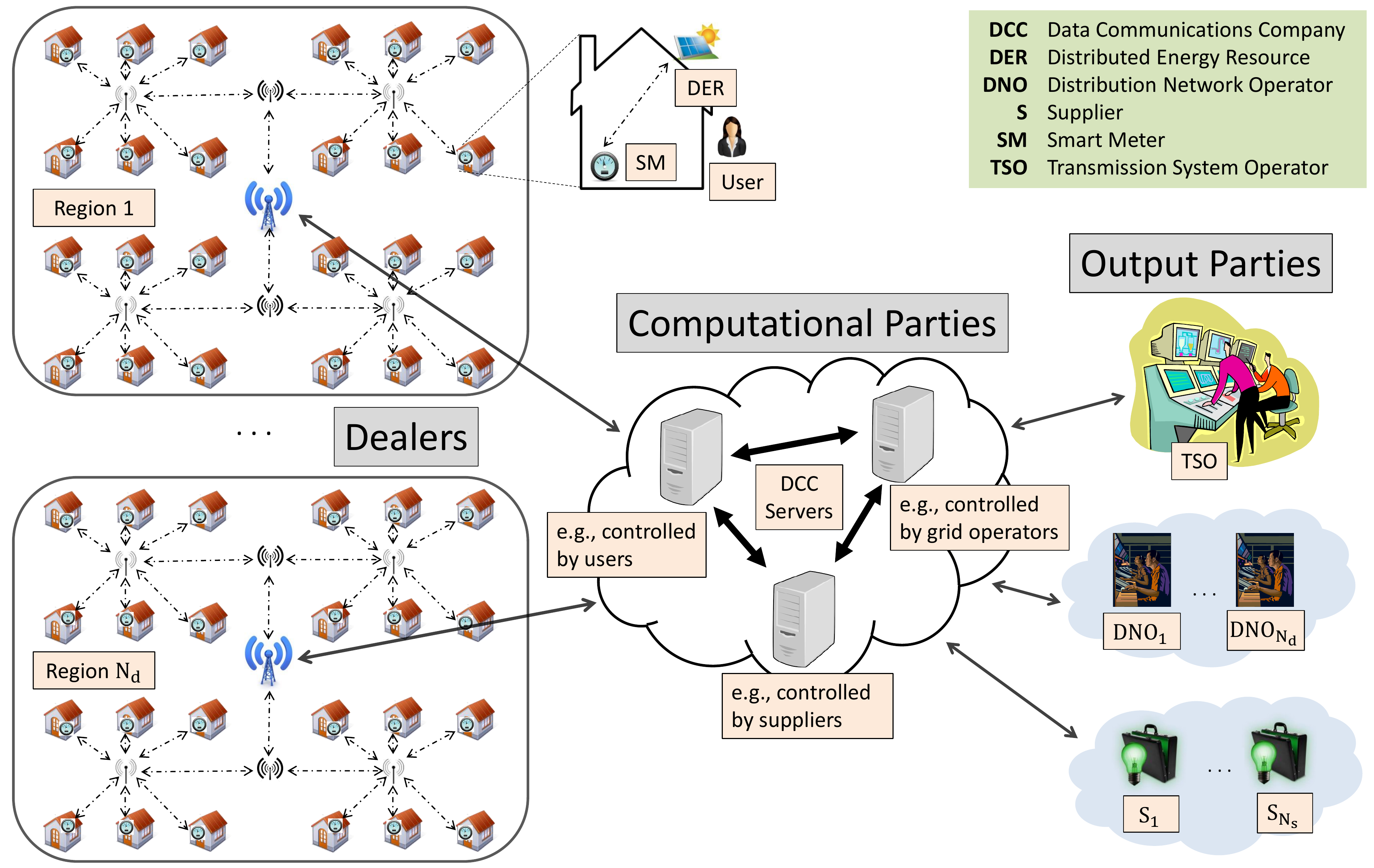}
\caption{System model.}
\label{fig: System Model in AMI}
\end{figure}

%--------------------------------------------------------------------------------------------

As shown in Fig.~\ref{fig: System Model in AMI}, our system model follows the smart metering architecture used in the UK~\cite{UK_smart_metering}, and consists of the following entities. \textit{Users} consume electricity and are billed for this by their contracted supplier. \textit{Distributed Energy Resources} (\textit{DERs}) are mini generators (e.g., solar panels) located on users' premises. Most of the electricity they generate is consumed by their owners. However, surplus electricity may be injected into the grid. \textit{Smart Meters} (\textit{SMs}) are advanced metering devices that measure the electricity flowing from the grid to the house and vice versa per time slot, $t_k$. SMs regularly communicate with other authorised SG entities. \textit{Suppliers} are responsible for supplying electricity to all users. % including those whose DERs did not generate sufficient electricity for their needs. 
They buy this electricity from generators on the wholesale market, and sell it to users. They are also obliged to buy any electricity their customers inject into the grid. If a supplier buys an incorrect amount of electricity, it will be punished with imbalance fees. \textit{Distribution Network Operators} (\textit{DNOs}) are responsible for managing and maintaining the electricity distribution lines (i.e., the low/middle voltage lines) in their respective regions. They charge suppliers distribution fees based on the electricity consumption of the suppliers' customers in each time slot. The suppliers then charge their customers this fee in turn. \textit{Transmission System Operator} (\textit{TSO}) is responsible for managing and maintaining the electricity transmission lines (i.e., the high voltage lines) in the grid as well as balancing the whole grid at any point in time. It charges suppliers transmission and balancing fees based on the consumption of their customers in each time slot. Similarly, suppliers pass this cost to users. \textit{Data Communications Company} (\textit{DCC}) is a centralised entity that consists of servers run by different parties. It collects and delivers users' metering data to the TSO, DNOs and suppliers. 

We also classify some of these entities into three groups: 
\textit{dealers} (i.e., SMs) who provide the input data, \textit{computational parties} (i.e., DCC servers) who perform computations on the input data, and \textit{output parties} (i.e., TSO, DNOs, and suppliers) who receive the results of the computations. 
%\textit{dealers} who provide the input data - the SMs, \textit{computational parties} who perform computations on the input data - the DCC servers, and \textit{output parties} who receive the results of the computations - the TSO, DNOs and suppliers. 
The SMs generate and provide the DCC servers with input data including the electricity consumption and generation data measured per time slot. The DCC servers  obtain input data from the SMs, jointly perform the necessary calculations and provide the TSO, DNOs and suppliers with the results of these calculations.

%%%%%%%%%%%%%%%%%%%%%%%%%%%%%%%%%%%%%%%%%%%%%%% Threat Model %%%%%%%%%%%%%%%%%%%%%%%%%%%%%%%%%%%%%%%%%%%%%%%%%%%%%%%

\subsection{Threat Model and Assumptions} 
\label{Threat Model and Assumptions}

%For our protocol design we use the following threat model. 
All external and internal entities (i.e., users, DCC servers, TSO, DNOs and suppliers) can act maliciously. External entities may eavesdrop or modify data in transit trying to gain access to confidential data or to disrupt the SG. Any user may try to modify metering data sent by their SMs in an attempt to gain financial advantage or learn other users' data. The TSO, DNOs and suppliers may manipulate users' metering data in an attempt to gain financial advantage, i.e., to manipulate the transmission/distribution fees and imbalance fines calculations. They may also try to learn individual users' consumption data or the aggregate consumption of any group of users located in different regions or contracted by competitors. Any DCC server may manipulate the metering data it receives/computes.

As our main concern is privacy protection of users, we make the following realistic assumptions. The communication channels among all entities are encrypted and authenticated. At least one DCC server is honest and trustworthy, i.e., it follows the protocol specifications. SMs are tamper-proof and sealed, thus no one can tamper with them without being detected. This is a common assumption in the community. If an SM is faulty or compromised (sending false data), there are standard techniques (e.g., use control meters) to trace back and identify this meter. However, this is out of the scope of this paper.

\subsection{Notations}
\label{Notations}

We denote the SM of household $i$ as ${\textnormal{SM}_i} \in \mathbb{SM}$, where $\mathbb{SM}$ is the set of all the SMs in the grid of a country, and the amount of electricity taken from the grid (i.e., imported electricity) and the amount fed back to the grid (i.e., exported electricity) by household $i$ during the $k$th time slot, $t_k$, as $\textnormal{E}_i^{\textnormal{imp},t_k} \in~\mathbb{E}^{\textnormal{imp},t_k}$ and $\textnormal{E}_i^{\textnormal{exp},t_k} \in~\mathbb{E}^{\textnormal{exp},t_k}$, respectively. $\mathbb{E}^{\textnormal{exp},t_k}$ and $\mathbb{E}^{\textnormal{exp},t_k}$ are the aggregate of consumption data and electricity fed back data, respectively, of all the households during $t_k$ in the grid. More notations are given in Table~\ref{table Notations}. Also, $\sum(X)$ denotes the aggregate (sum) value of all elements in set $X$. 

%%%%%%%%%%%%%%%%%%%%%%% TABLE %%%%%%%%%%%%%%%%%%%%%%%%%%%%%%%%%%%%

\begin{table}[!t]
\caption{Notations}
\label{table Notations}
\centering
\begin{tabular}{l p{6.60cm}}
\hline
Symbol & Meaning \\
\hline
%\hline

$t_k$ & $k$th time slot, $k=1, \ldots, {\textnormal{N}_{\textnormal{t}}}$ \\ 

$\textnormal{d}_{j}$ & the DNO operating in region $j$, $j=1, \ldots, {\textnormal{N}_{\textnormal{d}}}$\\

$\textnormal{s}_{u}$ & $u$th supplier, $u=1, \ldots, {\textnormal{N}_{\textnormal{s}}}$ \\

$\textnormal{SM}_{i}$ & the SM belonging to household $i$ \\ 

$\mathbb{SM}$ & set of all the SMs in a specific country \\

$\mathbb{SM}_{\textnormal{d}_j}$ & set of all the SMs operated by DNO $\textnormal{d}_j$ \\

$\mathbb{SM}^{\textnormal{imp}}_{\textnormal{s}_u}$ & set of all the SMs whose users buy electricity from $\textnormal{s}_u$ \\

$\mathbb{SM}^{\textnormal{exp}}_{\textnormal{s}_u}$ & set of all the SMs whose users sell electricity to $\textnormal{s}_u$ \\

$\mathbb{SM}^{\textnormal{imp}}_{\textnormal{d}_j, \textnormal{s}_u}$ & set of all the SMs operated by $\textnormal{d}_j$ and whose users buy electricity from $\textnormal{s}_u$ \\

$\mathbb{SM}^{\textnormal{exp}}_{\textnormal{d}_j, \textnormal{s}_u}$ & set of all the SMs operated by $\textnormal{d}_j$ whose users sell electricity to $\textnormal{s}_u$ \\

%$^{\textnormal{imp}}\textnormal{E}_i^{t_k}$ & amount of electricity imported by household $i$ during $t_k$ \\

%$^{t_k}\textnormal{E}_i^{\textnormal{imp}}$ & amount of electricity imported by household $i$ during $t_k$ \\

$\textnormal{E}_i^{\textnormal{imp},{t_k}}$ & amount of electricity imported by household $i$ during $t_k$ \\

$\textnormal{E}_i^{\textnormal{exp},t_k}$ & amount of electricity exported by household $i$ during $t_k$ \\

$\mathbb{E}^{\textnormal{imp},t_k}$ & aggregate data of all $\textnormal{E}_i^{\textnormal{imp},t_k}$ for $\textnormal{SM}_i \in \mathbb{SM}$ \\

$\mathbb{E}^{\textnormal{exp},t_k}$ & aggregate data of all $\textnormal{E}_i^{\textnormal{exp},t_k}$ for $\textnormal{SM}_i \in \mathbb{SM}$ \\

$\mathbb{E}_{\textnormal{d}_j}^{\textnormal{imp},t_k}$ & aggregate data of all $\textnormal{E}_i^{\textnormal{imp},t_k}$ for $\textnormal{SM}_i \in  \mathbb{SM}_{\textnormal{d}_j}$ \\

$\mathbb{E}_{\textnormal{d}_j}^{\textnormal{exp},t_k}$ & aggregate data of all $\textnormal{E}_i^{\textnormal{exp},t_k}$ for $\textnormal{SM}_i \in  \mathbb{SM}_{\textnormal{d}_j}$ \\

$\mathbb{E}_{\textnormal{s}_u}^{\textnormal{imp},t_k}$ & aggregate data of all $\textnormal{E}_i^{\textnormal{imp},t_k}$ for $\textnormal{SM}_i \in~ \mathbb{SM}^{\textnormal{imp}}_{\textnormal{s}_u}$ \\

$\mathbb{E}_{\textnormal{s}_u}^{\textnormal{exp},t_k}$ & aggregate data of all $\textnormal{E}_i^{\textnormal{exp},t_k}$ for $\textnormal{SM}_i \in~\mathbb{SM}^{\textnormal{exp}}_{\textnormal{s}_u}$ \\

$\mathbb{E}_{\textnormal{d}_j,\textnormal{s}_u}^{\textnormal{imp},t_k}$ & aggregate data of all $\textnormal{E}_i^{\textnormal{imp},t_k}$ for $\textnormal{SM}_i \in~ \mathbb{SM}^{\textnormal{imp}}_{\textnormal{d}_j, \textnormal{s}_u}$ \\

$\mathbb{E}_{\textnormal{d}_j,\textnormal{s}_u}^{\textnormal{exp},t_k}$ & aggregate data of all $\textnormal{E}_i^{\textnormal{exp},t_k}$ for $\textnormal{SM}_i \in~ \mathbb{SM}^{\textnormal{exp}}_{\textnormal{d}_j, \textnormal{s}_u}$ \\

%$\textnormal{ID}_i$ & identity of entity $i$ \textcolor{red}{index i has already been used above for the SM} \\

%$\textnormal{TS}_{i}$ & time stamp of entity $i$ \textcolor{red}{same comment}\\

\hline
\end{tabular}
\end{table}

%%%%%%%%%%%%%%%%%%%%%%%%%%%%%%%%%%%%%%%%%%%%%%%%%%%%%%%%%%%%%%%%%%%%

%%%%%%%%%%%%%%%%%%%%%%%%%%%%%%%%%%%%%%%%%%%%%%% Design Requirements %%%%%%%%%%%%%%%%%%%%%%%%%%%%%%%%%%%%%%%%%%%%%%%%%%%%%%

\subsection{Design Requirements}

%Our protocol should satisfy the following requirements.

\subsubsection{Functional Requirements}
\label{Functional requirements} 

\begin{itemize}

\item[(F1)] For each time period $t_k$, each DNO $\textnormal{d}_j$ should access:

\begin{itemize}

\item[a)] $\mathbb{E}^{\textnormal{imp},t_k}_{\textnormal{d}_j}$ and $\mathbb{E}^{\textnormal{exp},t_k}_{\textnormal{d}_j}$, in order to better manage the distribution network in its region,

\item[b)] $\mathbb{E}^{\textnormal{imp},t_k}_{\textnormal{d}_j,\textnormal{s}_u}$ and $\mathbb{E}^{\textnormal{exp},t_k}_{\textnormal{d}_j,\textnormal{s}_u}$, for $u=1, \ldots, {\textnormal{N}_{\textnormal{s}}}$, in order to split the distribution fees fairly among the suppliers.

\end{itemize}

\item[(F2)] For each time period $t_k$, each supplier $\textnormal{s}_u$ should access:

\begin{itemize}

\item[a)] $\mathbb{E}^{\textnormal{imp},t_k}_{\textnormal{s}_u}$ and $\mathbb{E}^{\textnormal{exp},t_k}_{\textnormal{s}_u}$, in order to predict its customers' electricity consumption and production accurately, so that it can avoid receiving imbalance fines,

\item[b)] $\mathbb{E}^{\textnormal{imp},t_k}_{\textnormal{d}_j,\textnormal{s}_u}$ and $\mathbb{E}^{\textnormal{exp},t_k}_{\textnormal{d}_j,\textnormal{s}_u}$ for $j=1, \ldots, {\textnormal{N}_{\textnormal{d}}}$, so it can be assured that it pays the correct transmission and distribution network fees to the TSO and each DNO, respectively. Transmission network fees can also be made region-dependent to encourage suppliers to buy electricity from sources located closer to the demand. %\textcolor{red}{I would say that for the transmission fees you need the information in a) not the division per DNO} \textcolor{blue}{MM: added an extra sentence for clarification.}

\end{itemize}

\item[(F3)] For each time period $t_k$, the TSO should access: 

\begin{itemize}

\item[a)] $\mathbb{E}^{\textnormal{imp},t_k}_{\textnormal{d}_j,\textnormal{s}_u}$ and $\mathbb{E}^{\textnormal{exp},t_k}_{\textnormal{d}_j,\textnormal{s}_u}$, for $u=1, \ldots, {\textnormal{N}_{\textnormal{s}}}$, so it can split transmission network fees among suppliers,

\item[b)] $\mathbb{E}^{\textnormal{imp},t_k}_{\textnormal{s}_u}$ and $\mathbb{E}^{\textnormal{exp},t_k}_{\textnormal{s}_u}$, for $u=1, \ldots, {\textnormal{N}_{\textnormal{s}}}$, so it can calculate the imbalance fine for each supplier, 

\item[c)] $\mathbb{E}^{\textnormal{imp},t_k}_{\textnormal{d}_j}$ and $\mathbb{E}^{\textnormal{exp},t_k}_{\textnormal{d}_j}$, for $j=1, \ldots, {\textnormal{N}_{\textnormal{d}}}$, to identify the regions that cause imbalance, thus to decide which measures to take to avoid the imbalance, and 

\item[d)] $\mathbb{E}^{\textnormal{imp},t_k}$ and $\mathbb{E}^{\textnormal{exp},t_k}$, to balance the grid efficiently.

\end{itemize}

\end{itemize}

%\textcolor{red}{I'm not sure there is advantage to doing it all with these symbols, I feel like they make the explanation needlessly complicated} \textcolor{blue}{MM: without these notations there will be lots of repetition in explaining the design requirements.}

\subsubsection{Security and Privacy Requirements}
\label{Security requirements}

\begin{itemize}

%\item[(S1)] Message authenticity: The recipient should be assured that the message has not been altered during transit, is fresh and indeed from the claimed source.
\item[(S1)] Privacy of users: individual users' fine-grained metering data should not be revealed to any in(ex-)ternal SG entity. %The group of users the consumption data has been aggregated over should always be as large as possible considering the applications of the data. In other words, the authorised entities should not be able to access fine-grained consumption data that has been aggregated over, for example, only two users. 

\item[(S2)] Confidentiality of users' data: only the authorised entities, i.e., TSO, DNO and suppliers, should have access only to the aggregates of users' metering data.

\item[(S3)] Authorisation: SG entities should be allowed to access the aggregate data only of the users whom they provide services to. For the DNO this means only the users living in the region it operates, for the supplier this means only the users who have a contract with it.

\item[(S4)] Fault tolerance: (Partial) loss of metering data should still allow computation of aggregates of the data.

\end{itemize}

%--------------------------------------------------------------------------------------------

%\subsubsection{Security under MPC.} 
%Secure multiparty computation can be used to compute any function with perfect under the information theoretic model, when an honest majority is consider. Indeed, these results were introduced by Ben-Or  et al.~\cite{BGW88} (BGW) and Chaum et al. ~\cite{CCD88} (CDD). Furthermore, BGW shows how to achieve perfect security against semi-honest adversaries by using the Linear Secret Sharing Scheme introduced by Shamir~\cite{Shamir79}, as long as half of the  parties remained honest. For the  malicious case, however, in is necessary to use Verifiable Secret Sharing techniques and the  coalition grows to at least two thirds of the parties involved in the computation. A complete set of simulation based proofs for the BGW protocols was later introduced by Lindell et at.~\cite{AL15}. It has to be noted that in recent years, various MPC protocols that offer security against dishonest majorities have been introduced as well, we refer the user to ~\cite{BDOZ11,DPSZ12,DKLPSS13,KOS16} for a revision on the state of the art. Finally, 

\subsection{Cryptographic Notation} 

%MPC protocols that work under the arithmetic circuit paradigm perform multiplications and additions that operate as gates in a single circuit, i.e., the output of such gates are inputs for others, such that security is maintain throughout. This approach, which is secure under the hybrid model introduced by Canetti~\cite{Canetti2000}, allows us to assemble circuits to compute any function as long as no information leakage is produced, we call this characteristic obliviousness. The MPC component of our protocol, introduced in the next section, can be seen as a single arithmetic circuit in charge of the  consumption and production aggregation. It uses \textit{oblivious equality test} and \textit{permutation} as subprotocols. %We explore these concepts more in detail below.

Security of MPC protocols is typically analysed in the Universally Composable (UC) framework%, which was first introduced by Canetti
~\cite{Canetti2000}, under which, the ideal functionality of MPC is modelled as an Arithmetic Black Box (ABB). ABB can be seen as a generic procedure for secure computation, where any party can send its private input to ABB and ask it to calculate any computable function. The functions are represented as arithmetic circuits comprising additions, multiplications, permutations, equality tests, etc. As long as the circuit components are UC secure, the UC framework guarantees that the circuit can be executed securely. Our protocol uses \textit{equality test} and \textit{permutation} as components.% which we describe below.

%The protocols that we have mentioned in the section above work under the arithmetic circuits paradigm. Basically, they introduce protocols to perform multiplication and additions that operate as gates and can be connected as a circuit i.e.,the output of such gates are inputs for others, such that security is maintain throughout.  This kind, of approach  which can be seen as secure under the hybrid model introduced by Canetti~\cite{Canetti2000},  allows us to assemble  circuits to compute any function, as long  as no information leakage is produced,  we call this characteristic  obliviousness. The MPC component of our protocol can be seen as a single arithmetic  circuit in  charge of the  consumption and production aggregation.  It uses as subprotocols (gate subsets in our unique circuit), oblivious equality tests and  permutations. We explore those concepts  more in detail below.

%\begin{itemize}

\textit{Equality test} can be implemented in oblivious fashion by using just multiplications and additions. Any existing test~\cite{DKNT06,Toft13} is suitable for use. To simplify it, SMs could share their ID in its bit representation. This way the bit-wise comparison would require only $\sigma$ multiplications, where $\sigma$ is the bit length of the suppliers' ID. Algorithm~\ref{algo:equality_test} illustrates this. It can be optimised by parallelising the computation of multiplications so that only $\log(\sigma)$ communication rounds are needed.

%------------------------------------------------------------

\begin{algorithm}[t]
\scriptsize
\label{algo:equality_test} 
  \SetAlgorithmName{Algorithm}{y}{x} 
  \KwIn{Secret share bit representation of $x$, $[x]_1,\ldots,[x]_{\sigma}$ \\
        $\qquad\quad$Bit representation $y_1,\ldots,y_\sigma$ of public scalar $y$ to which $x$ is compared}
  \KwOut{A secret share of the output of the equality test [c]} 
	$[c] \leftarrow 0;$\\
	\For{$i \leftarrow 1$ \KwTo $\sigma$}	  {
  		$[c'] \leftarrow [x]_{i} +y_{i} -2\cdot ([x]_{i} \cdot y_{i});$\\
        $[c] \leftarrow [c] +[c'] - [c]\cdot [c'];$\\
	}
 \caption{Generic Equality Test}
\end{algorithm}

%------------------------------------------------------------

%\textit{Permutation:} Oblivious permutation suited for MPC has been studied yielding various complexity times~\cite{LWZ11,CKKL99}. Oblivious permutation protocols have been also adapted to be used in practice by other works related to electricity markets~\cite{AV16}. A typical approach, to understand the intuition behind oblivious permutation, is the use of a \textit{permutation matrix} where each entry $ij$ is a $\{0,1\}$ value and its dimension is $n \times n$ where $n$ is the size of the vector. Under this approach each vector entry is multiplied against a corresponding matrix column, and the results aggregated. This method presents a $\mathcal{O}(n^2)$ complexity. Other approaches, including the use of sorting networks (by replacing comparison gates by exchange gates where a permutation occurs with certain probability), can achieve better asymptotic times. However, using a pre-computed permutation network this can be achieved in (almost) $\mathcal{O}(n\dot log(n)) $ complexity~\cite{CKKL99}. 

\textit{Oblivious permutation} can be achieved by using an $n\times n$ Boolean permutation matrix, where $n$ is the size of the input to be permuted. 
Under this approach each entry of the input is multiplied against a corresponding matrix column, and the results are aggregated. This method has $\mathcal{O}(n^2)$ complexity. Other approaches (e.g., use of sorting networks) can achieve better asymptotic complexity. However, using a pre-computed permutation network this can be achieved in (almost) $\mathcal{O}(n\log(n))$ complexity~\cite{CKKL99}. Such approaches have been adapted for practical use in electricity markets~\cite{AV16}.

%\textit{Number Codification:} 
We assume that all secretly shared values are members of a field $\mathbb{Z}_p$ bounded by a sufficiently large prime $p$, such that no overflow occurs. If fixed point precision is needed, the entries can be multiplied with a large enough constant such that they can be shared as elements of $\mathbb{Z}_{p}$. 

%The lower half of the field is used to represent positive numbers, whereas the upper half - the negative numbers. 

%The user data is coded in tuples of the form $\{[\textnormal{s}^{\textnormal{imp}}_u], [\textnormal{s}^{\textnormal{exp}}_u], [ \textnormal{E}^{\textnormal{imp}}_i], [\textnormal{E}^{\textnormal{exp}}_i]\}$. This is the basic data composition that SMs send, although its encoding changes from protocol to protocol, as it is later explained. 

%--------------------------------------------------------------------------

%--------------------------------------------------------------------------------------------

%%%%%%%%%%%%%%%%%%%%%%%%%%%%%%%%%%%%%%%%%%%%%%%%%%%%%%%%%%%%%%%%%%%%%%%%%%%%%%%%%%%%%%%%%%%%
%%%%%%%%%%%%%%%%%%%%%%%%%%%%%%%%       The protocol     %%%%%%%%%%%%%%%%%%%%%%%%%%%%%%%%%%%%
%%%%%%%%%%%%%%%%%%%%%%%%%%%%%%%%%%%%%%%%%%%%%%%%%%%%%%%%%%%%%%%%%%%%%%%%%%%%%%%%%%%%%%%%%%%%

\section{Privacy-preserving Smart Metering Protocol} %%TODO schematic of the protocol
\label{Privacy-preserving Smart Metering Protocols}

%--------------------------------------------------------------------------------------------

In this section we propose a privacy-preserving MPC-based protocol for operational metering data collection. We give an overview of the protocol, and then propose three aggregation algorithms that offer different privacy/performance trade-offs.

\subsection{Overview of our Generic Protocol} 

%\subsubsection{Privacy Preserving Aggregation Protocol} 

%We introduce in this section various approaches to execute the aggregation component of our protocol. 

The generic protocol consists of the following four steps.

\subsubsection{Input data generation and distribution} Each SM generates three data tuples, each containing different shares of the user's contracted suppliers, consumption and generation data, and sends them to the corresponding computational parties.

\subsubsection{Region-based data aggregation} Once the input data of all the SMs are received, the computational parties aggregate the consumption and generation data for each region using one of the three aggregation algorithms described below. The output is in shared form and represents the region-based aggregate consumption and generation data per supplier. 

\subsubsection{Grid-based data aggregation} The computational parties compute the shares of all the grid-based aggregate consumption and generation data by simply adding the corresponding shares of the region-based aggregate data. 

\subsubsection{Output data distribution} Following the functional requirements specified in Section~\ref{Preliminaries}, the shares of the previously calculated aggregations are distributed to the TSO, DNOs and suppliers, accordingly. Finally, these entities reconstruct their required results by reconstructing the corresponding shares.

%--------------------------------------------------------

\begin{algorithm}[t]
\scriptsize
\label{algo:naive_mpc} 
  \SetAlgorithmName{Algorithm}{y}{x} \KwIn{Tuples from region $j$, $\{[\textnormal{s}^{\textnormal{imp}}_u], [\textnormal{s}^{\textnormal{exp}}_u], [ \textnormal{E}^{\textnormal{imp}}_i], [\textnormal{E}^{\textnormal{exp}}_i]\}$ for $\textnormal{SM}_i \in  \mathbb{SM}_{\textnormal{d}_j}$ }
  \KwOut{Shares of aggregate consumption data per supplier, $[\mathbb{E}_{\textnormal{d}_j,\textnormal{s}_u}^{\textnormal{imp}}]$ ~~~~~~~~~~Shares of aggregate production data per supplier, $[\mathbb{E}_{\textnormal{d}_j,\textnormal{s}_u}^{\textnormal{exp}}]$}
  
$[\mathbb{E}_{\textnormal{d}_j,\textnormal{s}_u}^{\textnormal{imp}}] \leftarrow \{0_{1},...,0_{\textnormal{N}_{\textnormal{s}}}\};$ \\

$[\mathbb{E}_{\textnormal{d}_j,\textnormal{s}_u}^{\textnormal{exp}}] \leftarrow \{0_{1},...,0_{\textnormal{N}_{\textnormal{s}}}\};$ \\

	\For{$i \leftarrow 1$ \KwTo $|\mathbb{SM}_{\textnormal{d}_j}|$ }	  {
    	\For{$u \leftarrow 1$ \KwTo ${\textnormal{N}_{\textnormal{s}}}$}
        {
        $[c] \leftarrow [\textnormal{s}^{\textnormal{imp}}_{u}] \stackrel{?}{=} \textnormal{s}_{u};$ \\
        
        $[\mathbb{E}_{\textnormal{d}_j,\textnormal{s}_u}^{\textnormal{imp}}] \leftarrow [\mathbb{E}_{\textnormal{d}_j,\textnormal{s}_u}^{\textnormal{imp}}] + [c] * [\textnormal{E}^{\textnormal{imp}}_i];$ \\
        }
            	
               % $\quad \quad \quad \quad $// \texttt{the two loops could be parallelised}\\
                
                \For{$u \leftarrow 1$ \KwTo ${\textnormal{N}_{\textnormal{s}}} $  } 
       {
        $[c]\leftarrow [\textnormal{s}^{\textnormal{exp}}_{u}] \stackrel{?}{=} \textnormal{s}_{u};$ \\
        
        $[\mathbb{E}_{\textnormal{d}_j,\textnormal{s}_u}^{\textnormal{exp}}] \leftarrow [\mathbb{E}_{\textnormal{d}_j,\textnormal{s}_u}^{\textnormal{exp}}] + [c] * [ \textnormal{E}^{\textnormal{exp}}_i];$ \\
       } 
	}         
 \caption{Na\"{\i}ve Aggregation Algorithm (NAA)}
\end{algorithm}

%------------------------------------------------------- 

\subsection{Region-based Data Aggregation Algorithms}

In this section, we present our three region-based data aggregation algorithms that offer different trade-offs in terms of security, flexibility and performance. %The selection of the algorithm depends on the application requirements and available computational and communication resources.

\subsubsection{Na\"{\i}ve Aggregation Algorithm (NAA)}

%We introduce an MPC based approach that satisfies the requirements introduced in Section~\ref{Preliminaries}. 

A na\"{\i}ve approach to perform data aggregation with perfect privacy would be to implement a basic circuit that uses equality tests to identify users' suppliers. As shown in Algorithm~\ref{algo:naive_mpc}, SMs send their tuples $\{[\textnormal{s}^{\textnormal{imp}}_u], [\textnormal{s}^{\textnormal{exp}}_u], [ \textnormal{E}^{\textnormal{imp}}_i], [\textnormal{E}^{\textnormal{exp}}_i]\}$ to the DCC servers, so that the servers can classify the inputs by using oblivious comparisons. Although the algorithm is fairly adaptive to a growing number of suppliers, denoted as $\textnormal{N}_{\textnormal{s}}$, it is expensive in terms of performance as it still requires $\mathcal{O}(|\mathbb{SM}_{\textnormal{d}_j}|\cdot{\textnormal{N}_{\textnormal{s}}})$ equality tests, where $|\mathbb{SM}_{\textnormal{d}_j}|$ is the number of SMs in a given region $j$. %NAA achieves this with no leakage (oblivious), hence, as stated before, it could achieve perfect security under MPC~\cite{BGW88}.

%------------------------------------------------------------

\begin{algorithm}[t]
\scriptsize
\label{algo:ncp_MPC} 
  \KwIn{Tuples from region $j$, $\{[\textnormal{s}^{\textnormal{imp}}_u], [\textnormal{s}^{\textnormal{exp}}_u], [ \textnormal{E}^{\textnormal{imp}}_i], [\textnormal{E}^{\textnormal{exp}}_i]\}$ for $\textnormal{SM}_i \in  \mathbb{SM}_{\textnormal{d}_j}$ }
  \KwOut{Shares of aggregate consumption data per supplier, $[\mathbb{E}_{\textnormal{d}_j,\textnormal{s}_u}^{\textnormal{imp}}]$ ~~~~~~~~~~Shares of aggregate production data per supplier, $[\mathbb{E}_{\textnormal{d}_j,\textnormal{s}_u}^{\textnormal{exp}}]$}
  
$[\mathbb{E}_{\textnormal{d}_j,\textnormal{s}_u}^{\textnormal{imp}}] \leftarrow \{0_{1},...,0_{\textnormal{N}_{\textnormal{s}}}\};$ \\

$[\mathbb{E}_{\textnormal{d}_j,\textnormal{s}_u}^{\textnormal{exp}}] \leftarrow \{0_{1},...,0_{\textnormal{N}_{\textnormal{s}}}\};$ \\

$[\mathbb{SM}_{\textnormal{d}_j}'] \leftarrow \mathtt{permute}([\mathbb{SM}_{\textnormal{d}_j}]);$ \\

	\For{$i \leftarrow 1$ \KwTo $|\mathbb{SM}_{\textnormal{d}_j}'|$ }	  {
    	$\textnormal{s}^{\textnormal{imp}}_u \leftarrow \mathtt{open}([\textnormal{s}^{\textnormal{imp}}_u]);$\\
    	\For{$u \leftarrow 1$ \KwTo $\textnormal{N}_{\textnormal{s}}$}
        {
        $c\leftarrow \textnormal{s}^{\textnormal{imp}}_u == \textnormal{s}_{u};$ \\
        
                $[\mathbb{E}_{\textnormal{d}_j,\textnormal{s}_u}^{\textnormal{imp}}] \leftarrow [\mathbb{E}_{\textnormal{d}_j,\textnormal{s}_u}^{\textnormal{imp}}] + c * [ \textnormal{E}^{\textnormal{imp}}_i];$ \\
                
        }
        
	}
    
    $[\mathbb{SM}_{\textnormal{d}_j}'] \leftarrow \mathtt{permute}([\mathbb{SM}_{\textnormal{d}_j}]); $%\textcolor{red}{??}  // \texttt{the two loops could be parallelized}
    \\
    \For{$i \leftarrow 1$ \KwTo $|\mathbb{SM}_{\textnormal{d}_j}'|$ }	  {
    	$\textnormal{s}^{\textnormal{exp}}_u \leftarrow \mathtt{open}([\textnormal{s}^{\textnormal{exp}}_u]);$\\
    	\For{$u \leftarrow 1$ \KwTo $\textnormal{N}_{\textnormal{s}}$}
        {
        $c\leftarrow \textnormal{s}^{\textnormal{exp}}_u == \textnormal{s}_{u};$ \\
        
                $[\mathbb{E}_{\textnormal{d}_j,\textnormal{s}_u}^{\textnormal{exp}}] \leftarrow [\mathbb{E}_{\textnormal{d}_j,\textnormal{s}_u}^{\textnormal{exp}}] + c * [ \textnormal{E}^{\textnormal{exp}}_i];$ \\
                
        }
        
	}
 \caption{No Comparison Aggr. Algorithm (NCAA)} 
\end{algorithm}

%------------------------------------------------------------

\subsubsection{No Comparison Aggregation Algorithm (NCAA)} 

To improve the performance of the aggregation algorithm, some level of disclosure to the DCC servers can be allowed, in this case, the number of users linked to each supplier. As shown in Algorithm~\ref{algo:ncp_MPC}, the DCC servers permute the tuples corresponding to the same region and aggregate them in a non-interactive way afterwards. Considering that its complexity is dominated by the oblivious permutation calls, NCAA multiplication bound is $\mathcal{O}(|\mathbb{SM}_{\textnormal{d}_j}|\cdot \log(|\mathbb{SM}_{\textnormal{d}_j}|)$. Also, NCAA keeps its flexibility with respect to $\textnormal{N}_{\textnormal{s}}$ at the cost of disclosing the number of SMs associated to each supplier. %However, this drawback could be addressed by the DCC servers adding dummy shares with zero tuple values (except the supplier) to statistically hide the affiliation distribution of the SMs. These shares could be generated by the use of techniques such as pseudo-random secret sharing~\cite{CDY05}.

%-----------------------------------------------------------------

\begin{algorithm}[b]
\scriptsize
\label{algo:nip_mpc} 
  \SetAlgorithmName{Algorithm}{y}{x} \KwIn{Tuples from region $j$, $\{[ \mathbf{E}^{\textnormal{imp}}_i], [\mathbf{E}^{\textnormal{exp}}_i]\}$ for $\textnormal{SM}_i \in  \mathbb{SM}_{\textnormal{d}_j}$, where $ \mathbf{E}^{\textnormal{imp}}_i$ and $ \mathbf{E}^{\textnormal{exp}}_i$ are vectors of size $\textnormal{N}_{\textnormal{s}}$ with only one non-zero entry at position $u$ }
  
  \KwOut{Shares of aggregate consumption data per supplier, $[\mathbb{E}_{\textnormal{d}_j,\textnormal{s}_u}^{\textnormal{imp}}]$ ~~~~~~~~~~Shares of aggregate production data per supplier, $[\mathbb{E}_{\textnormal{d}_j,\textnormal{s}_u}^{\textnormal{exp}}]$}
  
$[\mathbb{E}_{\textnormal{d}_j,\textnormal{s}_u}^{\textnormal{imp}}] \leftarrow \{0_{1},...,0_{\textnormal{N}_{\textnormal{s}}}\};$\\ 
$[\mathbb{E}_{\textnormal{d}_j,\textnormal{s}_u}^{\textnormal{exp}}] \leftarrow \{0_{1},...,0_{\textnormal{N}_{\textnormal{s}}}\};$\\
	\For{$i \leftarrow 1$ \KwTo $|\mathbb{SM}_{\textnormal{d}_j}|$}	  {
    	\For{$u \leftarrow 1$ \KwTo $\textnormal{N}_{\textnormal{s}}$}
        {
        $[\mathbb{E}_{\textnormal{d}_j,\textnormal{s}_u}^{\textnormal{imp}}] \leftarrow [\mathbb{E}_{\textnormal{d}_j,\textnormal{s}_u}^{\textnormal{imp}}] +  [\textnormal{E}^{\textnormal{imp}}_{i,u}];$\\
        $[\mathbb{E}_{\textnormal{d}_j,\textnormal{s}_u}^{\textnormal{exp}}] \leftarrow [\mathbb{E}_{\textnormal{d}_j,\textnormal{s}_u}^{\textnormal{exp}}] +  [\textnormal{E}^{\textnormal{exp}}_{i,u}];$\\
        }
	}
 \caption{Non-Interactive Aggr. Algorithm (NIAA)}
\end{algorithm}

%-----------------------------------------------------------

\subsubsection{Non-Interactive Aggregation Algorithm (NIAA)} 

To further improve the performance of the aggregation algorithm, the input data of SMs can be tweaked such that the aggregation could be done without the need of communication between the DCC servers. To achieve this, SMs have to encode their input data into vectors of all zeros but one unique non-zero entry. %zero vectors with one unique non-zero entry. 
These vectors are of size $\textnormal{N}_{\textnormal{s}}$ and the non-zero entries are their $\textnormal{E}^{\textnormal{imp}}$ and $\textnormal{E}^{\textnormal{exp}}$, respectively. This way the DCC servers only need to process the aggregation of the shares, which is non-interactive for any generalized Linear Secret Sharing Scheme (LSSS). By reducing the flexibility ($\textnormal{N}_{\textnormal{s}}$ has to be fixed), NIAA, as shown in Algorithm~\ref{algo:nip_mpc}, is implemented with neither comparison nor multiplication operations. To support the addition of a new supplier, SMs will have to use a vector with a sufficiently large pre-fixed size, providing $0$ for the non-used slots, so that the system is flexible in accommodating a large number of suppliers. An easy alternative would be to allow the system to feed (via an update) all the SMs with a parameter -- the number of suppliers -- so that SMs will encode their inputs as vectors of correct length. Moreover, the supplier ID position has to be agreed in advance. NIAA also produces no leakage, hence it achieves perfect security. %Its asymptotic complexity is $\mathcal{O}(1)$, as only linear operations are needed.

%%%%%%%%%%%%%%%%%%%%%%%%%%%%%%%%%%%%%%%%%%%%%%%%%%%%%%%%%%%%%%%%%%%%%%%%%%%%%%%%%%%%%%%%%%%%
%%%%%%%%%%%%%%%%%%%%%%%%%%%%%%%%       Security Analysis     %%%%%%%%%%%%%%%%%%%%%%%%%%%%%%%%%%%%
%%%%%%%%%%%%%%%%%%%%%%%%%%%%%%%%%%%%%%%%%%%%%%%%%%%%%%%%%%%%%%%%%%%%%%%%%%%%%%%%%%%%%%%%%%%%

\section{Security Analysis}
\label{Security Analysis}
%To begin with, we note that the security assumptions listed in Section~\ref{Preliminaries} are intended for protection against some of the threats in our threat model. For instance, the natural assumption that the SMs are tamper-proof and sealed is intended for protection against malicious users attempting to modify the metering data for financial advantage; and the assumption on communication channel being encrypted and authenticated, which can be achieved by using TLS, is for protection against adversaries attempting to eavesdrop or modify the data in transit. Hence, we focus on the security of our protocol against malicious DNOs/TSO/suppliers as well as semi-honest DCC. 

We analyse the security of our protocol in the Universal Composability (UC) framework~\cite{Canetti2000}, in which the ideal functionality of MPC is modelled as Arithmetic Black Box (ABB)~\cite{DN03}. ABB is a generic procedure for MPC providing an abstraction of the details of MPC operations and of secret sharing. Players can send their private input to ABB to compute any computable function on their private inputs. The results from the computation are stored in the internal state of ABB to be used in the subsequent computations. Stored values are only made public if enough number of players agree. Formally, ABB is defined as follows.
\begin{definition}[ABB Functionality $\mathcal{F}_{\textit{ABB}}$]
\label{def:FABB}
The ideal functionality $\mathcal{F}_\textit{ABB}$ for MPC is defined as follows:
\begin{itemize}
\item $\mathtt{input}$: Receive a value $\alpha \in \mathbb{Z}_M$ and store $\alpha$. 
\item $\mathtt{share}(\alpha)$: Create a share $[\alpha]$ of $\alpha$.
%\item $\mathtt{Add}([x], [y])$: Compute $z = x + y$ and store $[z]$.
\item $\mathtt{product}([\alpha], [\beta])$: Compute $\gamma = \alpha \times \beta$ and store $[\gamma]$.
\item $\mathtt{compare}([\alpha],[\beta])$: Compare $\alpha$ and $\beta$, and return $0$ if $\alpha<\beta$ and 1 otherwise.
\item $\mathtt{equal}([\alpha],[\beta])$: Check if $\alpha=\beta$; return $1$ if $\alpha=\beta$, 0 otherwise.
%\item $\mathtt{sRand}()$: Sample $r \xleftarrow{R}\mathbb{Z}_M$ and store $[r]$. 
\item $\mathtt{permute([X])}$: Given an input $[X]\in\mathbb{Z}_M^n$ return a random permutation of it. 
\item $\mathtt{open}([\alpha])$: Send the value $\alpha$ to all players.
\end{itemize}
Addition and scalar multiplication are denoted by  their corresponding conventional symbols $+$ and $\times$.
\end{definition}

%We say that the protocol is secure against a (possibly malicious) entity if the privacy of the inputs to the protocol is preserved against such entity. Therefore, by security against malicious DNOs/TSO/suppliers, we mean that the SMs' inputs (i.e., the users' consumption or production data) to the protocol is preserved against these malicious entities. In our case, DNOs, TSO, and suppliers are the recipients of the MPC output. By definition, one cannot learn anything from MPC output other than what can already be learned from the output itself. Therefore, security against malicious DNOs, TSO, and suppliers is straightforward. 

\begin{definition}[UC-security \cite{C00}]
\label{def:UC-security}
 A real protocol $\pi$ is UC-secure if, for any adversary $\mathcal{A}$, there exists a simulator $\mathcal{S}$ for which no environment $\mathcal{Z}$ can distinguish with a non-negligible probability if it is interacting with $\mathcal{A}$ and $\pi$ or $\textsf{S}$ and the ideal functionality $\mathcal{F}$.
\end{definition}

\begin{definition}[Universal Composition \cite{C00}]
\label{UC-theorem}
Let $\pi$ and $\rho$ be two protocols such that $\rho$
$\epsilon_1$-UC-emulates $\mathcal{G}$ and $\pi\circ\mathcal{G}$ $\epsilon_2$-UC-emulates
$\mathcal{F}$, when using $\mathcal{G}$ as a subroutine. Then
$\pi\circ\rho$ $(\epsilon_1+\epsilon_2)$-UC-emulates $\mathcal{F}$, when using
$\rho$ as a subroutine.
\end{definition}

%MPC allows to compute any function with perfect (information-theoretic) security when an honest majority is considered, as shown by Ben-Or et al.~\cite{BGW88} and Chaum~et~al.~\cite{CCD88}. Security against malicious adversaries can be achieved using verifiable secret sharing and, in this case, security can be achieved against collusion among up two thirds of the players involved in the computation; see, for example, \cite{DPSZ12,DKLPSS13}. A complete set of simulation based proofs for the MPC protocols is introduced by Lindell~et~at.~\cite{AL15}.  

We define the ideal functionality of the operation metering data collection protocol as follows.  
\begin{definition}[Ideal Functionality of Operational Metering Data Collection Protocol $\mathcal{F}_\textit{OMDCP}$]
Given a set of input tuples $\{\textnormal{s}^{\textnormal{imp}}_u, \textnormal{s}^{\textnormal{exp}}_u,  \textnormal{E}^{\textnormal{imp}}_i, \textnormal{E}^{\textnormal{exp}}_i\}$ from SMs, the ideal functionality $\mathcal{F}_{\textit{OMDCP}}$ computes the aggregate consumption data $\mathbb{E}_{\textnormal{d}_j,\textnormal{s}_u}^{\textnormal{imp}}$ and aggregate production data $\mathbb{E}_{\textnormal{d}_j,\textnormal{s}_u}^{\textnormal{exp}}$ per supplier, and returns the results to the relevant parties (i.e., suppliers, DNOs, and TSO). 
\end{definition}

Next, we show that the three region-based data aggregation algorithms -- NAA, NCAA, and NIAA -- securely realise the ideal functionality $\mathcal{F}_\textit{OMDCP}$. This follows from the fact that all three algorithms use MPC operations provided by ABB, which are both sequentially and concurrently universal composable. 

\begin{theorem}
Let $\pi_{\text{NAA}}$ be the operational metering data collection protocol employing the NAA algorithm. Then $\pi_\text{NAA}$ securely emulates $\mathcal{F}_{\textit{OMDCP}}$.
\end{theorem}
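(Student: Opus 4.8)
The plan is to prove the statement in the $\mathcal{F}_{\textit{ABB}}$-hybrid model and then invoke the universal composition result stated above to lift the guarantee to the real protocol. Since the NAA algorithm (Algorithm~\ref{algo:naive_mpc}) is written entirely as a sequence of ABB operations --- $\mathtt{equal}$, $\mathtt{product}$, and addition --- I would first treat every such call as an ideal query to $\mathcal{F}_{\textit{ABB}}$, so that $\pi_{\text{NAA}}$ becomes a protocol in the $\mathcal{F}_{\textit{ABB}}$-hybrid world. The crucial structural observation about NAA is that $\mathtt{open}$ is \emph{never} invoked on any intermediate value: the supplier identifiers $[\textnormal{s}^{\textnormal{imp}}_u]$, the per-SM data $[\textnormal{E}^{\textnormal{imp}}_i]$, the equality indicators $[c]$, and all partial sums remain stored inside the ABB throughout the region-based aggregation, and only the final aggregates $[\mathbb{E}^{\textnormal{imp}}_{\textnormal{d}_j,\textnormal{s}_u}]$ and $[\mathbb{E}^{\textnormal{exp}}_{\textnormal{d}_j,\textnormal{s}_u}]$ are opened toward the designated output parties. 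This is exactly what separates NAA from NCAA, which leaks through its intermediate $\mathtt{open}$ calls.

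Next I would establish \textbf{correctness}, i.e.\ that the circuit computes the same function as $\mathcal{F}_{\textit{OMDCP}}$. For each $\textnormal{SM}_i$ and each supplier index $u$, the test $[c]\leftarrow[\textnormal{s}^{\textnormal{imp}}_u]\stackrel{?}{=}\textnormal{s}_u$ returns a shared indicator equal to $1$ precisely when $\textnormal{SM}_i$'s contracted supplier is $\textnormal{s}_u$ and $0$ otherwise; multiplying by $[\textnormal{E}^{\textnormal{imp}}_i]$ and accumulating therefore routes each SM's datum into exactly one supplier's running sum. Summing over all $\textnormal{SM}_i\in\mathbb{SM}_{\textnormal{d}_j}$ yields $\mathbb{E}^{\textnormal{imp}}_{\textnormal{d}_j,\textnormal{s}_u}$, and symmetrically for the export data, matching the output $\mathcal{F}_{\textit{OMDCP}}$ is defined to produce. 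Since the equality test operates on bit representations it is exact, so there is no error term to bound.

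Then I would construct the simulator $\mathcal{S}$. Given an adversary $\mathcal{A}$ corrupting some subset of computational parties and/or output parties, $\mathcal{S}$ runs $\mathcal{A}$ internally and must reproduce its view from what $\mathcal{F}_{\textit{OMDCP}}$ alone reveals. For every ABB call, $\mathcal{S}$ answers exactly as $\mathcal{F}_{\textit{ABB}}$ would: $\mathtt{input}$, $\mathtt{share}$, $\mathtt{equal}$, $\mathtt{product}$ and addition return nothing observable to $\mathcal{A}$ beyond the fact that the operation occurred, so $\mathcal{S}$ merely replays the control flow, which is itself public (the loop bounds $|\mathbb{SM}_{\textnormal{d}_j}|$ and $\textnormal{N}_{\textnormal{s}}$ are public parameters). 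The only values ever opened are the final aggregates destined for corrupted output parties; for these $\mathcal{S}$ retrieves the corresponding result from $\mathcal{F}_{\textit{OMDCP}}$ and uses the LSSS share-simulator to hand $\mathcal{A}$ shares consistent with that result. Under the assumption that at least one DCC server is honest, the shares held by corrupted parties are distributed independently of the honest inputs, so the real and simulated views are identically distributed.

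Finally I would assemble the pieces: the simulator shows that $\pi_{\text{NAA}}$ securely emulates $\mathcal{F}_{\textit{OMDCP}}$ in the $\mathcal{F}_{\textit{ABB}}$-hybrid model, and the operations exported by the ABB are assumed (by the cited constructions for equality test and permutation) to UC-realise $\mathcal{F}_{\textit{ABB}}$; applying the universal composition theorem stated above then yields that the real $\pi_{\text{NAA}}$ securely emulates $\mathcal{F}_{\textit{OMDCP}}$. The main obstacle is \emph{not} the indistinguishability argument --- which is essentially immediate once the no-$\mathtt{open}$ observation is made --- but rather pinning down precisely what a corrupted output party is entitled to learn and verifying that the shares delivered to it can be simulated from the $\mathcal{F}_{\textit{OMDCP}}$ outputs without implicitly revealing any other party's aggregate. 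This is where the authorisation structure (S3) and the single-honest-server assumption must be used with care.
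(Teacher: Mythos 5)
Your proposal is correct and follows essentially the same route as the paper: the paper's proof is simply the observation that NAA comprises only ABB operations ($\mathtt{equal}$, addition, scalar multiplication), that no intermediate value is opened, and that security then follows by the Universal Composition Theorem. Your write-up fleshes out the correctness argument and the simulator construction that the paper leaves implicit, but the underlying argument is identical.
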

\begin{proof} 
The proof is a straightforward application of the Universal Composition Theorem~\ref{UC-theorem}, since (i) each ABB operation is universally composable and (ii) the NAA algorithm comprises only ABB operations, namely, $\mathtt{equal}$, addition, and scalar multiplication.  
\end{proof}

\begin{theorem}
Let $\pi_{\text{NCAA}}$ be the operational metering data collection protocol employing the NAA algorithm. Then $\pi_\text{NCAA}$ securely emulates $\mathcal{F}_{\textit{OMDCP}}$.
\end{theorem}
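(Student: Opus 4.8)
The plan is to reuse the Universal Composition skeleton of the preceding NAA theorem, but to isolate the one genuinely new ingredient in NCAA: the $\mathtt{open}$ of the \emph{permuted} supplier identifiers. For NAA the argument was trivial because the circuit is built purely from data-oblivious ABB gates, so nothing is ever declassified and composition alone closes the proof. NCAA instead deliberately opens the (permuted) supplier IDs so that the subsequent comparison and accumulation can be carried out locally. Consequently the heart of the proof is not composition per se but a precise characterisation of what this declassification reveals, namely that it leaks exactly the per-supplier SM counts in region $j$ and nothing further.

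First I would fix the leakage by inspecting Algorithm~\ref{algo:ncp_MPC}. The first loop opens only the import-supplier IDs, in the order produced by $\mathtt{permute}$, so the collection of opened values is precisely the multiset of import-supplier identifiers of the SMs in $\mathbb{SM}_{\textnormal{d}_j}$; equivalently, it determines and is determined by the import histogram $L_j^{\textnormal{imp}} = \{(\textnormal{s}_u, n_u^{\textnormal{imp}})\}_u$. The second, \emph{independent}, $\mathtt{permute}$ call followed by the second loop analogously reveals the export histogram $L_j^{\textnormal{exp}}$. I would remark that the independence of the two permutations is security-relevant: it guarantees that the import and export supplier of a single household remain unlinkable, so the joint assignment never leaks. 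The post-open steps (scalar multiplication by the cleartext bit $c$ and the additions into $[\mathbb{E}_{\textnormal{d}_j,\textnormal{s}_u}^{\textnormal{imp}}]$ and $[\mathbb{E}_{\textnormal{d}_j,\textnormal{s}_u}^{\textnormal{exp}}]$) are non-interactive on shares and reveal nothing more.

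Next I would construct the simulator $\mathcal{S}$. On input $(L_j^{\textnormal{imp}}, L_j^{\textnormal{exp}})$ from the (leaky) ideal functionality, $\mathcal{S}$ samples a uniformly random arrangement of each multiset and uses its entries to answer, in order, the $\mathtt{open}$ queries of the two loops; every remaining ABB call is simulated exactly as in the NAA proof. The indistinguishability step is then: because $\mathtt{permute}$ returns a uniformly random permutation that is independent of the secret SM–supplier assignment (by the specification of $\mathtt{permute}$ in $\mathcal{F}_{\textit{ABB}}$, Definition~\ref{def:FABB}), the sequence of identifiers opened in the real execution is distributed exactly as a uniformly random arrangement of the same multiset. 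Hence the real and simulated opened sequences are identically distributed once conditioned on the leakage, the two views coincide, and the Universal Composition Theorem~\ref{UC-theorem} glues the oblivious components so that $\pi_{\text{NCAA}}$ securely emulates $\mathcal{F}_{\textit{OMDCP}}$ augmented with $(L_j^{\textnormal{imp}}, L_j^{\textnormal{exp}})$.

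The main obstacle is exactly this declassification: the clean ``a circuit of UC-secure gates is UC-secure'' argument that sufficed for NAA no longer applies verbatim, since $\mathtt{open}$ hands cleartext to the computational parties. The work is to prove that the permutation perfectly decouples the opened identifiers from the hidden order of the SMs, so that the opened sequence carries no information beyond the histograms; everything else reduces to the NAA analysis. I would also be careful to state that the functionality actually realised is the histogram-leaky variant, not the bare $\mathcal{F}_{\textit{OMDCP}}$, which outputs only the aggregates — NCAA provably reveals the per-supplier counts, and this must be reflected in the ideal functionality for the emulation claim to be sound.
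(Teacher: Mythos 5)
Your proof is correct, but it takes a substantially more detailed route than the paper's. The paper's own argument for this theorem is a two-line appeal to universal composition: it simply lists the ABB operations used by NCAA ($\mathtt{permute}$, $\mathtt{equal}$, addition, scalar multiplication, and $\mathtt{open}$) and declares that, since each is universally composable, the composed protocol securely emulates $\mathcal{F}_{\textit{OMDCP}}$ ``for the same reason'' as in the NAA case. You instead isolate the one step that the composition argument does not by itself handle --- the declassification of the permuted supplier identifiers --- characterise its leakage exactly as the per-supplier histograms $L_j^{\textnormal{imp}}, L_j^{\textnormal{exp}}$, and build an explicit simulator that answers the $\mathtt{open}$ queries from a uniformly random arrangement of those multisets, using the uniformity of $\mathtt{permute}$ to argue perfect indistinguishability. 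What your approach buys is precision: you correctly observe that NCAA cannot realise the bare $\mathcal{F}_{\textit{OMDCP}}$ (which outputs only the aggregates) but only a histogram-leaky variant of it, a point the paper acknowledges informally in the algorithm description (``at the cost of disclosing the number of SMs associated to each supplier'') yet does not reflect in the theorem statement or its proof. In that sense your write-up repairs a gap in the paper's own argument rather than merely elaborating it; the paper's approach buys brevity and uniformity across the three theorems, but at the cost of a security claim that is, as literally stated, stronger than what NCAA achieves. Your additional observation that the two independent $\mathtt{permute}$ calls keep a household's import and export suppliers unlinkable is also correct and does not appear in the paper.
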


\begin{proof} 
For the same reason as in the proof of the previous Theorem. In this case, the ABB operations that are used in NCAA are $\mathtt{permute}$, $\mathtt{equal}$, addition, scalar multiplication, and $\mathtt{open}$.  
\end{proof}

\begin{theorem}
Let $\pi_{\text{NIAA}}$ be the operational metering data collection protocol employing the NIAA algorithm. Then $\pi_\text{NAA}$ securely emulates $\mathcal{F}_{\textit{OMDCP}}$.
\end{theorem}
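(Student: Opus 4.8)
The plan is to follow the exact template used in the proofs of the two preceding theorems, namely a direct appeal to the Universal Composition Theorem~\ref{UC-theorem}. The argument rests on two observations that I would establish in turn: (i) every primitive exposed by the ideal functionality $\mathcal{F}_{\textit{ABB}}$ is universally composable by assumption, and (ii) the NIAA protocol $\pi_{\text{NIAA}}$ is built entirely out of such primitives, so its composition inherits UC-security and therefore securely emulates $\mathcal{F}_{\textit{OMDCP}}$.

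First I would read off Algorithm~\ref{algo:nip_mpc} precisely which ABB operations it invokes. Unlike NAA and NCAA, NIAA performs neither $\mathtt{equal}$ comparisons, nor $\mathtt{permute}$, nor $\mathtt{open}$, nor scalar multiplications: once the SMs pre-encode each input as a length-$\textnormal{N}_{\textnormal{s}}$ vector with a single non-zero entry at the supplier's position, the computational parties only execute additions on the secret shares. Hence the only ABB operations in play are $\mathtt{input}$/$\mathtt{share}$ (to ingest the encoded tuples) and addition (to accumulate them per supplier). Each of these is universally composable, so chaining them through Theorem~\ref{UC-theorem} yields the claim.

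The main obstacle, though a mild one, is not the composition step itself but the justification of the \emph{no leakage} claim stated for NIAA. Here I would argue that, since addition in any LSSS is a local, non-interactive operation and no value is ever opened during the aggregation, the simulator's task reduces to producing output shares that are distributed identically to those of the ideal functionality; there is no transcript to simulate beyond the shares themselves. I would be explicit that the correctness of the unit-support encoding (exactly one non-zero entry, placed at the agreed supplier slot) is an input-validity assumption on the honest dealers rather than part of the security reduction, so it does not enter the emulation argument. Consequently $\pi_{\text{NIAA}}$ securely emulates $\mathcal{F}_{\textit{OMDCP}}$, and in fact achieves perfect (rather than merely statistical) security because no probabilistic or revealing ABB primitive is used.
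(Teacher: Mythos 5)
Your proposal matches the paper's own argument: the paper's proof is simply the observation that NIAA uses only the addition operation of the ABB, so UC-emulation of $\mathcal{F}_{\textit{OMDCP}}$ follows immediately from the Universal Composition Theorem, exactly as you argue. Your additional remarks on the triviality of the simulator (no opened values, no interactive transcript) and on treating the unit-support encoding as an input-validity assumption are correct elaborations of the same route rather than a different one.
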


\begin{proof} 
In the case of NIAA, only addition is used, so the proof is again straightforward.   
\end{proof}

Note that we build all our protocols over the idealised functionality given by $\mathcal{F}_\textit{ABB}$, and no intermediate value is revealed on any of them. Security then trivially follows from the simulation of the protocols where the adversary is restricted to what already is allowed under the ideal functionality. This principle applies to all the results in this paper.

\textit{Fault tolerance and collusion-attacks resistance:}
Our solutions are also fault-tolerant. This property is guaranteed due to the fact that in all our solutions the SMs send shares of the consumption data to the respective entities. These shares are generated using Shamir secret sharing with a threshold, say, $\tau$. Therefore, even if some shares of the metering data are lost in transit, as long as $\tau+1$ shares are present, the DCC servers can still compute the aggregates correctly.
%\textcolor{red}{Comment Mustafa: Abdel, can you add below two, three sentences about collusion resistance? For your convenience, I have copy-pasted an old text written by you, so you can just revise (and update) it? The idea is to say that our solution protects against colluding DCC servers and that the level of protection depends on which MPC protocols we use.}

Regarding security against collusion attacks, security under MPC can be achieved under two possible settings: an honest majority, and a dishonest majority e.g.,~\cite{CCD88,BGW88,DPSZ12,DKLPSS13}. In the former, a majority of computational parties (i.e., at least two of the DCC servers in our case) need to remain honest in order to achieve perfect security. In the latter case, we can protect against any collusion attacks as long as there is at least one honest computational party, by choosing a protocol secure against dishonest majority to implement the underlying MPC functionality.

\section{Performance Evaluation}
\label{Performance Evaluation}

This section evaluates the performance of our protocol (and our proposed data aggregation algorithms) in terms of computational complexity and communication cost using parameters of the smart metering architecture in the UK. In addition, we compare our protocol to (i) the traditional protocol (denoted as TRAD) proposed by the UK government, and (ii) the DEP2SA protocol~\cite{DEP2SA} which uses the same system model as this work. Note that TRAD does not provide sufficient privacy protection as the DCC access all metering data of all users.

\begin{table*}[!t]
  \centering  \caption{The computational complexity of our protocol when a different data aggregation algorithm is used.}
  \scalebox{0.9}{
  \begin{tabular}{l c c c c c}
  \cline{2-6}
  & \textbf{SM} & \textbf{DCC servers} & \textbf{TSO}  & \textbf{DNO} & \textbf{Supplier} \\
%  \hline
% Operation    & Share & Multiplication & Open  & Open & Open \\
    
  \hline

\textbf{TRAD protocol}   & - & - & - & - & - \\

\textbf{DEP2SA protocol~\cite{DEP2SA}}  &  1 enc & - & - & $\textnormal{N}_{\textnormal{s}}$ (dec + rnr) & $\textnormal{N}_{\textnormal{d}}$ enc \\

  \textbf{Our protocol with NAA}    &  1 share & $|\textnormal{s}_{u}|\times |\mathbb{SM}_{\textnormal{d}_j}|\times \textnormal{N}_{\textnormal{s}} + |\mathbb{SM}_{\textnormal{d}_j}|\times \textnormal{N}_{\textnormal{s}}$ multiplication & $\textnormal{N}_{\textnormal{d}}~\times~\textnormal{N}_{\textnormal{s}}$ open & $\textnormal{N}_{\textnormal{s}}$ open  & $\textnormal{N}_{\textnormal{d}}$ open \\
  
  \textbf{Our protocol with NCAA}   &  1 share & $2\times(|\mathbb{SM}_{\textnormal{d}_j}| \times \log(|\mathbb{SM}_{\textnormal{d}_j}|) +|\mathbb{SM}_{\textnormal{d}_j}|$ multiplication & $\textnormal{N}_{\textnormal{d}}~\times~\textnormal{N}_{\textnormal{s}}$ open & $\textnormal{N}_{\textnormal{s}}$ open  & $\textnormal{N}_{\textnormal{d}}$ open \\
   
  \textbf{Our protocol with NIAA}   &  $\textnormal{N}_{\textnormal{s}}$ share  & 0 multiplication & $\textnormal{N}_{\textnormal{d}}~\times~\textnormal{N}_{\textnormal{s}}$ open & $\textnormal{N}_{\textnormal{s}}$ open  & $\textnormal{N}_{\textnormal{d}}$ open \\
  
  \hline

 \hline
  \end{tabular}}
\label{tab:computational complexity}
\end{table*}

%-----------------------------------------------------------

%-----------------------------------------------------------

\begin{table}[!t]
  \centering  \caption{Simulation parameters.}
  \scalebox{1.0}{
  \begin{tabular}{l l}
  \hline
   \textbf{Parameter} & \textbf{Value} \\
  \hline
$\textnormal{N}_{\textnormal{d}}$  - number of DNOs &  $14$ \\
$\textnormal{N}_{\textnormal{s}}$ - number of suppliers & $ 10$ \\
$|\textnormal{s}_{u}|$ - length of supplier ID & $ 8$ \\ 
$|\mathbb{SM}_{\textnormal{d}_j}|$   - number of SMs in each DNO &  $0.5M - 4M$ \\
 $|x|, |r|$ - length of data, random number & $ 32 $ \\
 $|[x]|$ - length of share & $63 $ \\
 $|{c}|$ - length of encrypted data with symmetric key & $ 128$ \\
 $|{C}|$  - length of encrypted data with public key & $1024$ \\
  \hline
  \end{tabular}}
\label{tab:simulation_parameters}
\end{table}

%-----------------------------------------------------------

%-----------------------------------------------------------

\subsection{Computational Complexity}

The most computationally demanding step of our protocol is the \textit{region-based aggregation} algorithm. Therefore, we focus on this step. Moreover, since the cost of a share, addition and open operations is negligible compared to the cost of a multiplication operation (in an MPC setting), we take into account only the number of multiplications in our calculation.

\subsubsection{NAA complexity} This algorithm contains two loops which have the same number of multiplications. For each loop, NAA requires $|\textnormal{s}_{u}|\times |\mathbb{SM}_{\textnormal{d}_j}|\times \textnormal{N}_{\textnormal{s}}$ multiplications to perform the equality tests needed, and $|\mathbb{SM}_{\textnormal{d}_j}|\times \textnormal{N}_{\textnormal{s}}$ multiplications needed for the aggregation, where $|\textnormal{s}_{u}|$ is the bit length of the supplier ID, $|\mathbb{SM}_{\textnormal{d}_j}|$ is the number of SMs per region and $\textnormal{N}_{\textnormal{s}}$ is the number of suppliers in the retail market. However, as both loops are parallelizible, the total number of multiplications in NAA is equal to $|\textnormal{s}_{u}|\times |\mathbb{SM}_{\textnormal{d}_j}|\times \textnormal{N}_{\textnormal{s}} + |\mathbb{SM}_{\textnormal{d}_j}|\times \textnormal{N}_{\textnormal{s}}$. 

\subsubsection{NCAA complexity} The number of multiplications used by the NCAA depends on the permutation network used. For instance, the Batcher odd–even merge sorting network requires $|\mathbb{SM}_{\textnormal{d}_j}| \times \log^2(|\mathbb{SM}_{\textnormal{d}_j}|)$ exchange gates. Each of these gates requires three multiplications per item being permuted, in this case the supplier ID and the respective electricity consumption or generation value. Also, the open operation performed by the DCC servers has the same computational cost as performing a multiplication. In total, this adds up to $2\times(|\mathbb{SM}_{\textnormal{d}_j}| \times \log^2(|\mathbb{SM}_{\textnormal{d}_j}|) +|\mathbb{SM}_{\textnormal{d}_j}|$ multiplication-equivalent operations per loop. However, a permutation network can be built with only $|\mathbb{SM}_{\textnormal{d}_j}| \times \log(|\mathbb{SM}_{\textnormal{d}_j}|)$ exchange gates~\cite{CKKL99}, reducing the total to $2\times(|\mathbb{SM}_{\textnormal{d}_j}| \times \log(|\mathbb{SM}_{\textnormal{d}_j}|) +|\mathbb{SM}_{\textnormal{d}_j}|$. 

\subsubsection{NIAA complexity} NIAA does not perform any multiplications. As the cost of aggregation is negligible, given that it is just an arithmetic aggregation of shares, the total computational complexity of NIAA is negligible.

Table~\ref{tab:computational complexity} summarises the computational complexity of our data aggregation algorithms on per entity base. The cost of the operations performed by each SM, TSO, DNO and supplier is negligible compared to the cost of the operations performed by the DCC servers. In terms of computational complexity, NIAA is the most efficient aggregation algorithm as it does not require any communication between the DCC servers.

Note that TRAD does not introduce any additional cost at each entity as it does not provide privacy protection. In the case of DEP2SA, each SM encrypts its data using homomorphic encryption algorithm, each DNO performs $\textnormal{N}_{\textnormal{s}}$ decryption as well as recovery of a random number from each of the ciphertexts it receives, and each supplier has to perform $\textnormal{N}_{\textnormal{d}}$ encryptions. Similarly to the case of TRAD, DCC does not perform any computationally expensive operations.

%Its computational results are shown in Table~\ref{tab:naive_time}.

%-----------------------------------------------------------

%\begin{table}[!t]
%  \centering
%  \caption{Computational Results for NAA.}
%  \scriptsize
%%%  \begin{tabular}{ l c c }
%%    \hline
%    \textbf{SMs}~~     & \textbf{Multiplications}~~ & \textbf{CPU Time in Secs}~~\\
%    \hline
%	    1K           & $90K$  &  1.872 \\ 
%%	    100K         & $9M$   &  1872\\  
%        1M           & $90M$  &  1872\\ 
%        5M           & $450M$  & 9360\\ 
%        10M          & $900M$ &  18720\\ 
%
% \hline
%  \end{tabular}
%\label{tab:naive_time}
%\end{table}

%-----------------------------------------------------------

%Our estimations use the latter (see Table~\ref{tab:NCAA_time}).

%-----------------------------------------------------------

%\begin{table}[!t]
%  \centering
%  \caption{Computational Results for NCAA.}
%  \scriptsize
%  \begin{tabular}{ l c c }
%    \hline
%    \textbf{SMs}~~     & \textbf{Multiplications}~~ & \textbf{CPU Time in Secs}~~\\
%    \hline
%	    1K           & $22K$  &   0.45 \\ 
%	    100K         & $3.6M$   &  74\\  
%        1M           & $42M$  &  873\\ 
%        5M           & $240M$  & 4992\\ 
%        10M          & $500M$ &  10400\\ 
%
% \hline
%  \end{tabular}
%\label{tab:NCAA_time}
%\end{table}

%-----------------------------------------------------------

%%%%%%%%%%%%%%%%%%%%%%% FIGURE %%%%%%%%%%%%%%%%%%%%%%%%%%%%%%%%%%%%

\begin{figure}[t]
\centering
\includegraphics[trim= 0 0 0 0,clip=true,width=2.39in]{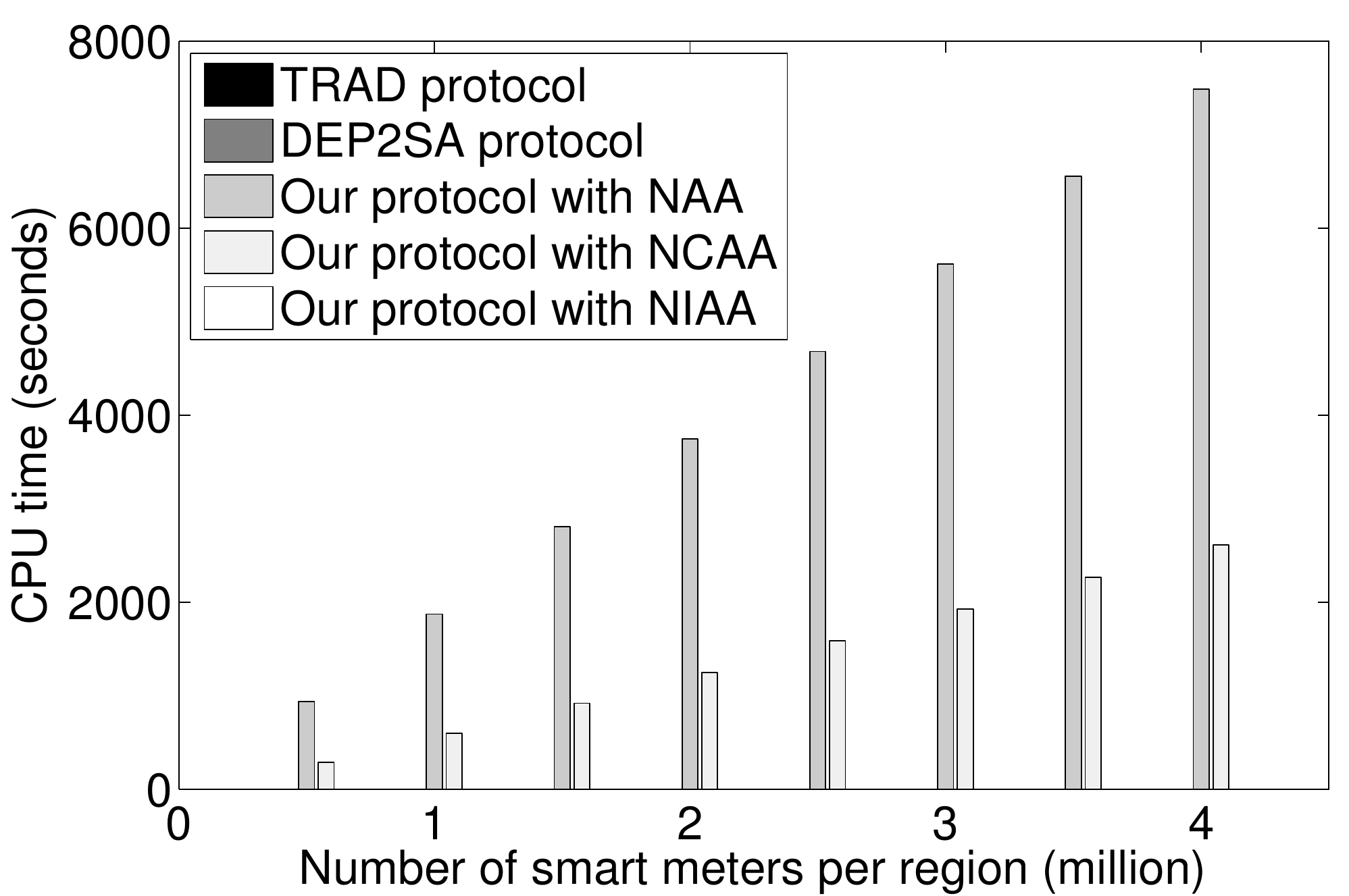}
\caption{Computational cost of the DCC servers.}
\label{fig:computational_results}
\end{figure}

%%%%%%%%%%%%%%%%%%%%%%%%%%%%%%%%%%%%%%%%%%%%%%%%%%%%%%%

We also conducted an experiment to test the performance of our algorithms. We used C++ and custom implementations of Shamir's SSS~\cite{Shamir79}, its linear addition and improved BGW protocol from Gennaro~et~al.~\cite{GRR90}, all presented in~\cite{Aly15}. We made use of the generalized equality test from Algorithm~\ref{algo:equality_test}. We run the three computational parties on the same machine, a 64-bit 2*2*10-cores Intel Xeon E5-2687 server at 3.1GHz, thus our results do not consider  network latency.

%This allowed us to use up to 63-bits-long inputs.

We first executed 2 million multiplications which, on average, resulted in $20.8\times 10^{-6}$ seconds per multiplication. We then calculated the CPU time needed by our algorithms for various settings. For our calculations we used the following parameters based on the UK's electrical grid~\cite{Elexon} and smart metering architecture~\cite{UK_DCC}: $\textnormal{N}_{\textnormal{d}} = 14$, $\textnormal{N}_{\textnormal{s}} = 10$, 
$|\textnormal{s}_{u}| = 8$, and
$|\mathbb{SM}_{\textnormal{d}_j}| = \{0.5M, \ldots, 4M\}$. All of our simulation parameters are given in Table~\ref{tab:simulation_parameters}. Note that the computational complexity does not depend on the metering data but on the smart metering architecture. Figure~\ref{fig:computational_results} depicts our experimental results. They indicate all the necessary CPU time required regardless of the number of processors. Considering that in each UK region there will be on average 2.2 million SMs, our protocol could be executed in less than ten minutes, even if NAA (our most computationally demanding algorithm) is used, by simply dividing the work between eight threads, thus making it practical for the UK smart metering architecture.

\begin{table*}[h]
  \centering  \caption{The communication overhead of the traditional protocol and our protocol.}
  \scalebox{0.9}{
  \begin{tabular}{l c c c}
  \cline{2-4}
  & \textbf{SMs-to-DCC} & \textbf{Between-DCC} & \textbf{DCC-to-TSO/DNOs/Suppliers}  \\ 
  \hline

\textbf{TRAD protocol}   &  $2\times \textnormal{N}_{\textnormal{d}} \times |\mathbb{SM}_{\textnormal{d}_j}| \times |x|$ & 0 & $ 6 \times \textnormal{N}_{\textnormal{d}} \times \textnormal{N}_{\textnormal{s}} \times |x|$ \\
  
 % \hline
 
\textbf{DEP2SA protocol~\cite{DEP2SA}}  &  $2\times \textnormal{N}_{\textnormal{d}} \times |\mathbb{SM}_{\textnormal{d}_j}| \times |C|$ & 0 & $ 2 \times \textnormal{N}_{\textnormal{d}} \times \textnormal{N}_{\textnormal{s}} \times (2 \times |C| + |x| + |r|)$ \\

  \textbf{Our protocol with NAA}    &  $12\times \textnormal{N}_{\textnormal{d}} \times |\mathbb{SM}_{\textnormal{d}_j}| \times |[x]|$ & $6\times |[x]| \times (|\textnormal{s}_{u}|\times |\mathbb{SM}_{\textnormal{d}_j}|\times \textnormal{N}_{\textnormal{s}} + |\mathbb{SM}_{\textnormal{d}_j}|\times \textnormal{N}_{\textnormal{s}})$ &  $ 18 \times \textnormal{N}_{\textnormal{d}} \times \textnormal{N}_{\textnormal{s}} \times |[x]|$ \\
  
  \textbf{Our protocol with NCAA}   &  $12\times \textnormal{N}_{\textnormal{d}} \times |\mathbb{SM}_{\textnormal{d}_j}| \times |[x]|$ & $ 6\times |[x]| \times (2\times(|\mathbb{SM}_{\textnormal{d}_j}| \times \log(|\mathbb{SM}_{\textnormal{d}_j}|) +|\mathbb{SM}_{\textnormal{d}_j}|)$ &  $ 18 \times \textnormal{N}_{\textnormal{d}} \times \textnormal{N}_{\textnormal{s}} \times |[x]|$ \\
   
  \textbf{Our protocol with NIAA}   &  $6\times \textnormal{N}_{\textnormal{d}} \times |\mathbb{SM}_{\textnormal{d}_j}| \times \textnormal{N}_{\textnormal{s}} \times |[x]|$ & 0 & $ 18 \times \textnormal{N}_{\textnormal{d}} \times \textnormal{N}_{\textnormal{s}} \times |[x]|$ \\
  \hline
  
  \end{tabular}}
\label{tab:communication_overhead}
\end{table*}

%-----------------------------------------------------------

%%%%%%%%%%%%%%%%%%%%%%% FIGURE %%%%%%%%%%%%%%%%%%%%%%%%%%%%%%%%%%%%

\begin{figure*}[t]
  \centering

\subfloat[At the SMs-to-DCC part]{\includegraphics[trim = 0 0 0 0,clip=true, width=2.24in]{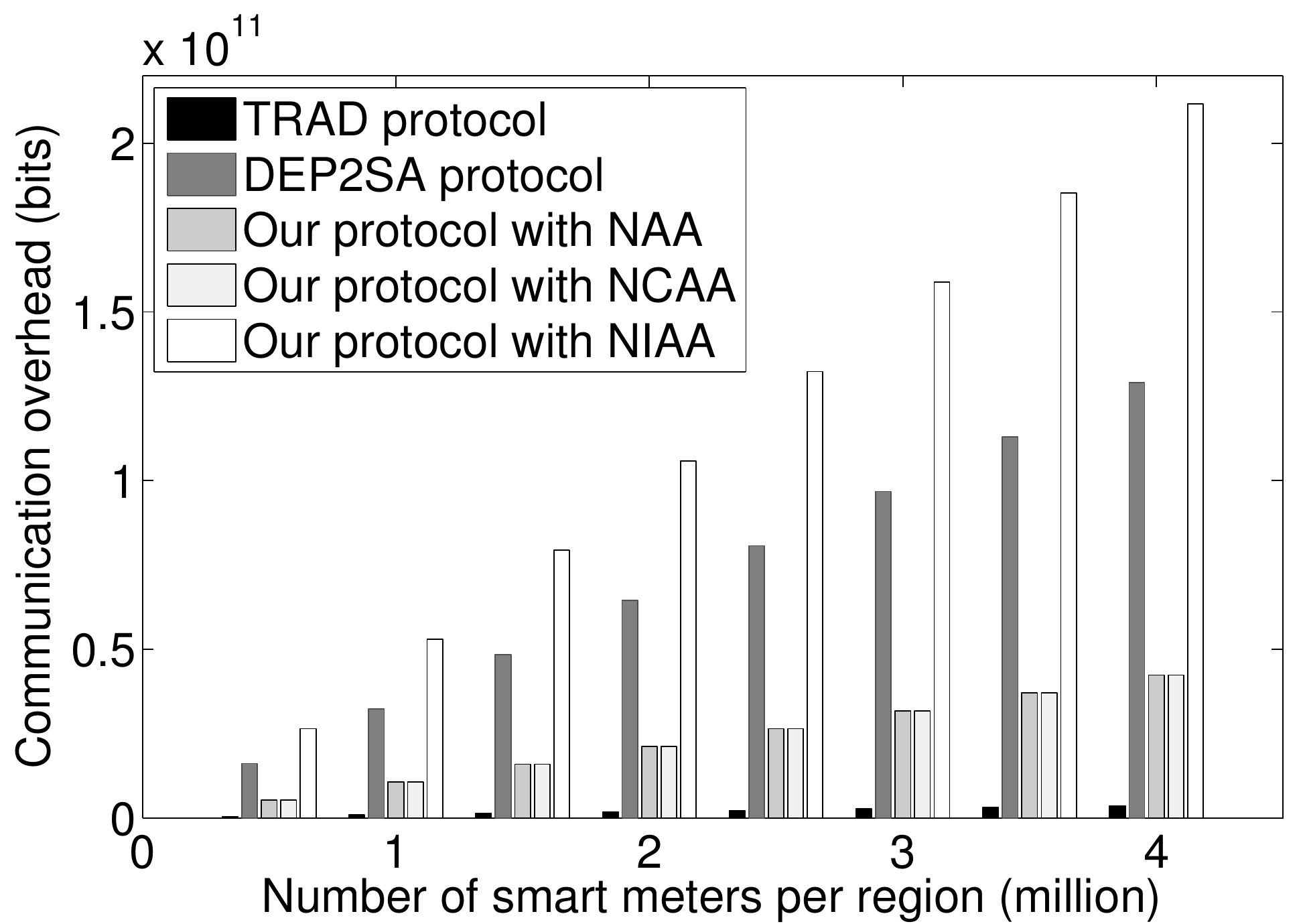}
  \label{fig:sfig1}}
  \hfil
\subfloat[At the Between-DCC part]{\includegraphics[trim = 0 0 0 0, clip=true, width=2.24in]{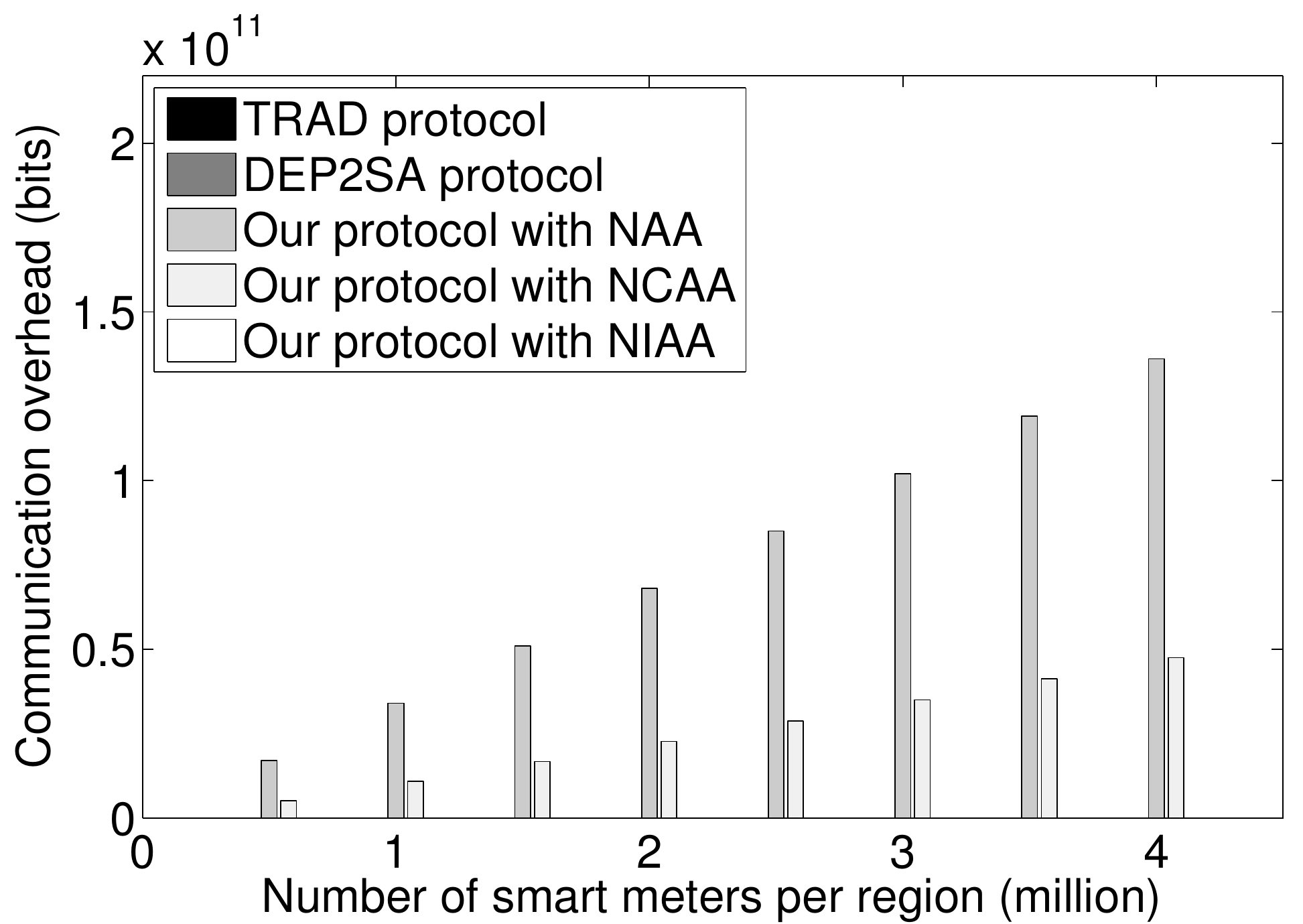}
  \label{fig:sfig2}}
  \hfil
\subfloat[At the DCC-to-TSO/DNOs/Suppliers part]{\includegraphics[trim = 0 -14 0 0,clip=true, width=2.24in]{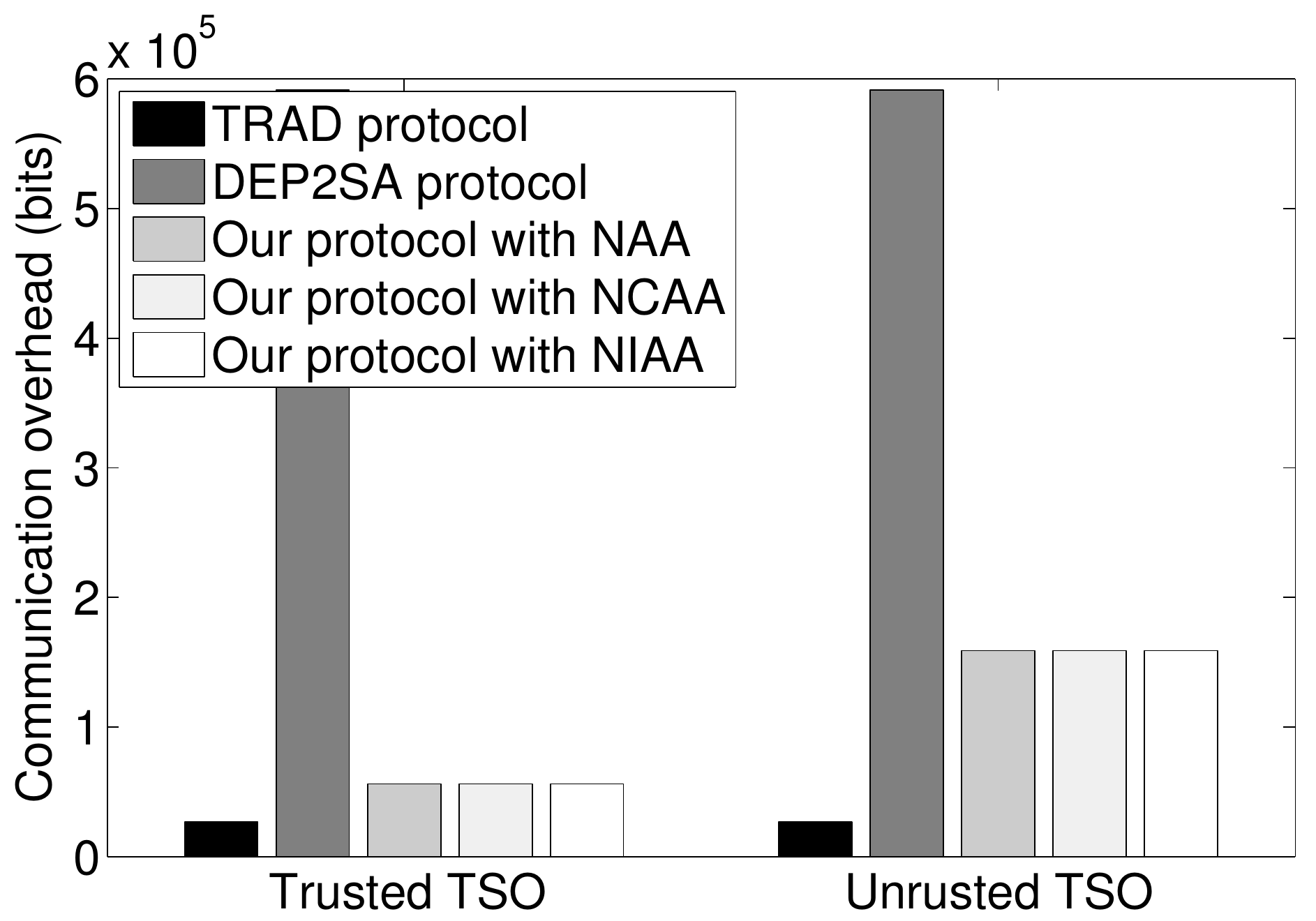}
  \label{fig:sfig3}}
  \hfil
  \caption{The communication overhead of our protocol at different parts of the grid.}
\label{fig:communication_overhead}
\end{figure*}

%%%%%%%%%%%%%%%%%%%%%%%%%%%%%%%%%%%%%%%%%%%%%%%%%%%%%%%

%%%%%%%%%%%%%%%%%%%%%%%%%%%%%%%%%%%%%%%%%%%%%%%%%%%%%%%

\subsection{Communication Cost} 

The communication cost of our protocol can be divided in three parts: SMs-to-DCC, Between-DCC and DCC-to-TSO/DNOs/Suppliers. For each part, we evaluate the communication cost of our protocol when a different aggregation algorithm is used and compare it to TRAD and DEP2SA.

\subsubsection{SMs-to-DCC part} In each time slot each SM sends its tuple to each of the DCC servers. If our protocol uses NAA or NCAA, the format of the tuple is $\{[\textnormal{s}^{\textnormal{imp}}_u], [\textnormal{s}^{\textnormal{exp}}_u], [ \textnormal{E}^{\textnormal{imp}}_i], [\textnormal{E}^{\textnormal{exp}}_i]\}$. Assuming there are three DCC servers, the communication cost is $3\times \textnormal{N}_{\textnormal{d}} \times |\mathbb{SM}_{\textnormal{d}_j}| \times ([\textnormal{s}^{\textnormal{imp}}_u] +  [\textnormal{s}^{\textnormal{exp}}_u] + [ \textnormal{E}^{\textnormal{imp}}_i] + [\textnormal{E}^{\textnormal{exp}}_i])$. If our protocol uses NIAA, the tuple's format is $\{[ \mathbf{E}^{\textnormal{imp}}_i], [\mathbf{E}^{\textnormal{exp}}_i]\}$, where $\{[ \mathbf{E}^{\textnormal{imp}}_i], [\mathbf{E}^{\textnormal{exp}}_i]\}$ are shares of vectors with size $\textnormal{N}_{\textnormal{s}}$. This adds up to a cost of $3\times \textnormal{N}_{\textnormal{d}} \times \textnormal{N}_{\textnormal{s}} \times |\mathbb{SM}_{\textnormal{d}_j}| \times ([ \textnormal{E}^{\textnormal{imp}}_i] + [\textnormal{E}^{\textnormal{exp}}_i])$. If TRAD is used, each SM sends $\{  \textnormal{E}^{\textnormal{imp}}_i, \textnormal{E}^{\textnormal{exp}}_i\}$ to the DCC which is a single entity in this case. This results in a communication cost of $\textnormal{N}_{\textnormal{d}} \times |\mathbb{SM}_{\textnormal{d}_j}| \times ( \textnormal{E}^{\textnormal{imp}}_i + \textnormal{E}^{\textnormal{exp}}_i)$. If DEP2SA is used, the communication cost is $\textnormal{N}_{\textnormal{d}} \times |\mathbb{SM}_{\textnormal{d}_j}| \times ( \textnormal{C}^{\textnormal{imp}} + \textnormal{C}^{\textnormal{exp}})$, where $\textnormal{C}^{\textnormal{imp}}$ and $\textnormal{C}^{\textnormal{exp}}$ are the ciphertext of $\textnormal{E}^{\textnormal{imp}}_i$ and $\textnormal{E}^{\textnormal{exp}}_i$, respectively. Note that these ciphertexts are the results of the homomorphic encryption operation performed by the SMs.

%\textit{NCAA}: As the NCAA requires input of exactly the same format as the NAA, $\{[\textnormal{s}^{\textnormal{imp}}_u], [\textnormal{s}^{\textnormal{exp}}_u], [ \textnormal{E}^{\textnormal{imp}}_i], [\textnormal{E}^{\textnormal{exp}}_i]\}$, the total communication overhead is the same as in the case of NAA: $3\times \textnormal{N}_{\textnormal{d}} \times |\mathbb{SM}_{\textnormal{d}_j}| \times ([\textnormal{s}^{\textnormal{imp}}_u] +  [\textnormal{s}^{\textnormal{exp}}_u] + [ \textnormal{E}^{\textnormal{imp}}_i] + [\textnormal{E}^{\textnormal{exp}}_i])$.

\subsubsection{Between-DCC part} In each time slot the DCC servers need to communicate between each other in order to preform the necessary computations for calculating the region-based aggregates per supplier. As each multiplication equals the transmission of a share from each of the DCC servers to the others, the communication cost for this part can be calculated by simply multiplying the total number of multiplications (given in Table~\ref{tab:computational complexity}) with the total number of shares exchanged between the DCC servers per multiplication. In our case this is equal to $6 \times |[x]|$, where $|[x]|$ is the size of a share. TRAD and DEP2SA do not have any communication cost in this part.

\begin{table*}[!t]
    \centering  \caption{{Comparison of our Protocol with Existing Aggregation Protocols.}}
  \scalebox{0.99}{
  \begin{tabular}{l c c c c c c c c c c c c c c c}
  \hline
   \textbf{Feature/Property} & 
   \textbf{Our} & \textbf{TRAD} & \textbf{\cite{DEP2SA}} & \textbf{\cite{Danezis2013}} & \textbf{\cite{Rottondi2013IEEE}} & \textbf{\cite{Gope2018}} & \textbf{\cite{Liu2018}} & \textbf{\cite{Lyu2018}} & \textbf{\cite{he2017efficient}} & \textbf{\cite{Li2018}} & \textbf{\cite{Knirsch2015}} & \textbf{\cite{Abdallah2018}} & \textbf{\cite{Shen2017}} & \textbf{\cite{EPPP4SMS}} & \textbf{\cite{TONYALI2018}} \\
  \hline
Real smart metering architecture	& Yes	& Yes 	& Yes 	& No 	& No 	& No 	& No 	& No 	& No 	& No 	& No 	& No 	& No 	& No 	& No \\
Multiple data recipients			& Yes 	& Yes 	& Yes 	& Yes 	& Yes 	& No 	& No 	& No 	& No 	& No 	& No 	& No 	& No 	& No 	& No \\
Flexible subsets of SMs				& Yes 	& Yes 	& Yes 	& Yes	& Yes 	& Yes 	& No 	& No 	& No 	& Yes 	& Yes 	& No 	& Yes 	& Yes 	& No \\
Consumption \& production data 		& Yes 	& Yes 	& No 	& No	& Yes 	& No 	& No 	& No 	& No 	& No 	& No 	& No 	& Yes 	& No 	& No \\
Multiple grid fees calculation 		& Yes 	& Yes 	& Yes 	& Yes	& Yes 	& No 	& No 	& No 	& No 	& No 	& Yes 	& No 	& Yes 	& Yes 	& No \\
Easy supplier switch for users		& Yes 	& Yes 	& Yes 	& No	& No 	& No 	& No 	& No 	& No 	& No 	& No 	& No 	& No 	& No 	& No \\
Easy SM addition/removal			& Yes 	& Yes 	& Yes 	& No	& No 	& No 	& No 	& No 	& No 	& No 	& No 	& Yes 	& No 	& No 	& No \\
\hline
Confidentiality of user data		& Yes 	& No 	& Yes 	& Yes	& Yes 	& No 	& Yes 	& Yes 	& Yes 	& Yes 	& Yes 	& Yes 	& No 	& Yes 	& Yes \\
Internal-attackers resistance		& Yes 	& No 	& Yes 	& No	& Yes 	& No 	& Yes 	& Yes 	& Yes 	& Yes 	& Yes 	& No 	& Yes 	& Yes 	& Yes \\
Privacy of users 					& Yes 	& No 	& Yes 	& Yes	& Yes 	& No 	& Yes 	& Yes 	& Yes 	& Yes 	& Yes 	& Yes 	& Yes 	& Yes 	& Yes \\
Authorisation 						& Yes 	& No 	& Yes 	& No	& Yes 	& Yes 	& Yes 	& No 	& No 	& No 	& No 	& No 	& No 	& No 	& No \\
Fault tolerant 						& Yes 	& No 	& No 	& No	& Yes 	& No 	& Yes 	& Yes 	& No 	& No 	& Yes 	& Yes 	& No 	& No 	& No \\
Collusion-attacks resistance 		& Yes 	& No 	& Yes 	& Yes 	& Yes 	& No 	& Yes 	& No 	& No 	& Yes 	& Yes 	& No 	& Yes 	& No 	& No \\
Require secure channels* 			& Yes 	& Yes 	& No 	& Yes	& Yes	& No 	& No 	& No 	& No 	& No 	& Yes 	& No 	& No 	& Yes 	& No \\
\hline
Computational cost at SMs			& L 	& L 	& H 	& H		& L 	& L 	& H 	& L 	& M 	& H 	& L 	& M 	& H 	& H 	& H \\
Computational cost at recipients	& L 	& L 	& M 	& L 	& L 	& L 	& L 	& L 	& L 	& H 	& L 	& L 	& L 	& H 	& H \\
Communication cost at SMs level		& L 	& L 	& L 	& L		& L 	& L 	& L 	& L 	& L 	& L 	& H 	& L 	& L 	& L 	& H \\
Overall communication cost 			& M 	& L 	& L 	& L 	& M 	& L 	& L 	& L 	& L 	& L 	& H 	& L 	& L 	& L 	& M \\
Scalable 							& Yes 	& Yes 	& Yes 	& Yes 	& Yes 	& Yes 	& Yes 	& Yes 	& Yes 	& Yes 	& No 	& Yes 	& Yes 	& Yes 	& Yes \\
  \hline

  \end{tabular}}
  ~
*This can be provided via the TLS/SSL protocol. ~~~~ L - Low, M - Medium, H - High ~~~~~~~~~~~~~~~~~~~~~~~~~~~~~~~~~~~~~~~~~~~~~~~~~~~~~~~~~~~~
\label{tab:comparison}
\end{table*}

%-----------------------------------------------------------

\subsubsection{DCC-to-TSO/DNOs/Suppliers part} In each time slot the DCC servers need to send the computed results to the TSO, DNOs and suppliers. As the output data of NAA, NCAA and NIAA is the same, the communication cost for this part is the same regardless of the aggregation algorithm. In detail, each DCC server has to send (i) $ \textnormal{N}_{\textnormal{d}} \times ([\mathbb{E}_{\textnormal{d}_j,\textnormal{s}_u}^{\textnormal{imp}}] + [\mathbb{E}_{\textnormal{d}_j,\textnormal{s}_u}^{\textnormal{exp}}])$ to each supplier, (ii) $ \textnormal{N}_{\textnormal{s}} \times ([\mathbb{E}_{\textnormal{d}_j,\textnormal{s}_u}^{\textnormal{imp}}] + [\mathbb{E}_{\textnormal{d}_j,\textnormal{s}_u}^{\textnormal{exp}}])$ to each DNO, and $ \textnormal{N}_{\textnormal{d}} \times \textnormal{N}_{\textnormal{s}} \times ([\mathbb{E}_{\textnormal{d}_j,\textnormal{s}_u}^{\textnormal{imp}}] + [\mathbb{E}_{\textnormal{d}_j,\textnormal{s}_u}^{\textnormal{exp}}])$ to the TSO. This results in a total communication cost of $ 9 \times \textnormal{N}_{\textnormal{d}} \times \textnormal{N}_{\textnormal{s}} \times ([\mathbb{E}_{\textnormal{d}_j,\textnormal{s}_u}^{\textnormal{imp}}] + [\mathbb{E}_{\textnormal{d}_j,\textnormal{s}_u}^{\textnormal{exp}}])$. If the suppliers and DNOs trust the TSO (which is usually the case in practice), they could directly obtain the aggregation results from the TSO. In that case, the communication cost will be reduced to $ 3 \times \textnormal{N}_{\textnormal{d}} \times \textnormal{N}_{\textnormal{s}} \times ([\mathbb{E}_{\textnormal{d}_j,\textnormal{s}_u}^{\textnormal{imp}}] + [\mathbb{E}_{\textnormal{d}_j,\textnormal{s}_u}^{\textnormal{exp}}]) + ( \textnormal{N}_{\textnormal{d}} + \textnormal{N}_{\textnormal{s}}) \times {c}_{\textnormal{d}_j,\textnormal{s}_u}$, where ${c}_{\textnormal{d}_j,\textnormal{s}_u}$ is an encrypted message containing the region-supplier based aggregate consumption and production data, i.e., ${c}_{\textnormal{d}_j,\textnormal{s}_u} = Enc_k(\mathbb{E}_{\textnormal{d}_j,\textnormal{s}_u}^{\textnormal{imp}}, \mathbb{E}_{\textnormal{d}_j,\textnormal{s}_u}^{\textnormal{exp}})$. If TRAD (DEP2SA) is used, the DCC sends the respective (ciphertexts of the) aggregate consumption and generation data, $\mathbb{E}_{\textnormal{d}_j,\textnormal{s}_u}^{\textnormal{imp}}, \mathbb{E}_{\textnormal{d}_j,\textnormal{s}_u}^{\textnormal{exp}}$ $(\mathbb{C}_{\textnormal{d}_j,\textnormal{s}_u}^{\textnormal{imp}}, \mathbb{C}_{\textnormal{d}_j,\textnormal{s}_u}^{\textnormal{exp}})$, to the output parties. This results in a communication cost of $ 3 \times \textnormal{N}_{\textnormal{d}} \times \textnormal{N}_{\textnormal{s}} \times (\mathbb{E}_{\textnormal{d}_j,\textnormal{s}_u}^{\textnormal{imp}} + \mathbb{E}_{\textnormal{d}_j,\textnormal{s}_u}^{\textnormal{exp}})$ for TRAD. In the case of DEP2SA. each DNO also sends the respective aggregate of the consumption and generation data, as well as a random number extracted from each ciphertext, to each of the suppliers. This results in a communication cost of  $\textnormal{N}_{\textnormal{d}} \times \textnormal{N}_{\textnormal{s}} \times (2 \times\mathbb{C}_{\textnormal{d}_j,\textnormal{s}_u}^{\textnormal{imp}} + 2 \times\mathbb{C}_{\textnormal{d}_j,\textnormal{s}_u}^{\textnormal{exp}} + \mathbb{E}_{\textnormal{d}_j,\textnormal{s}_u}^{\textnormal{imp}} + \mathbb{E}_{\textnormal{d}_j,\textnormal{s}_u}^{\textnormal{exp}} + \textnormal{r}_{\textnormal{d}_j,\textnormal{s}_u}^{\textnormal{imp}} + \textnormal{r}_{\textnormal{d}_j,\textnormal{s}_u}^{\textnormal{exp}})$.

Table~\ref{tab:communication_overhead} summarises the communication cost of our protocol (with a different aggregation algorithm used), TRAD and DEP2SA, where $|x|$, $|[x]|$, $|C|$, and $|r|$, denote the length of a message, its share, a ciphertext and a random number, respectively. Furthermore, using the parameters from the previous section and setting $|x| = |r| = 32 $, $|[x]| = 63 $, $|{c}| = 128$ and $|{C}| = 1024$, we depict the communication cost of our protocol at each part and the entire smart metering architecture in~Fig.~\ref{fig:communication_overhead} and Fig.~\ref{fig:total_communication_overhead}, respectively. As expected, our protocol has higher communication cost than TRAD due to the privacy protection it offers. In comparison to DEP2SA, only our protocol with NCAA performs better. Regarding the choice of data aggregation algorithms, NCAA is the most efficient. However, this algorithm discloses towards the DCC servers the number of users linked to each supplier. In practice, such disclosure can be tolerated by users. If such disclosures are not accepted, NAA or NIAA should be used. Both algorithms have comparable communication costs, the difference being in the part of the smart metering architecture where the cost is concentrated. In the case of NAA, the main cost incurs at the Between-DCC part, whereas in the case of NIAA -- at the SMs-to-DCC part.

%%%%%%%%%%%%%%%%%%%%%%% FIGURE %%%%%%%%%%%%%%%%%%%%%%%%%%%%%%%%%%%%

\begin{figure}[t]
\centering
\includegraphics[trim= 0 0 0 0,clip=true,width=2.39in]{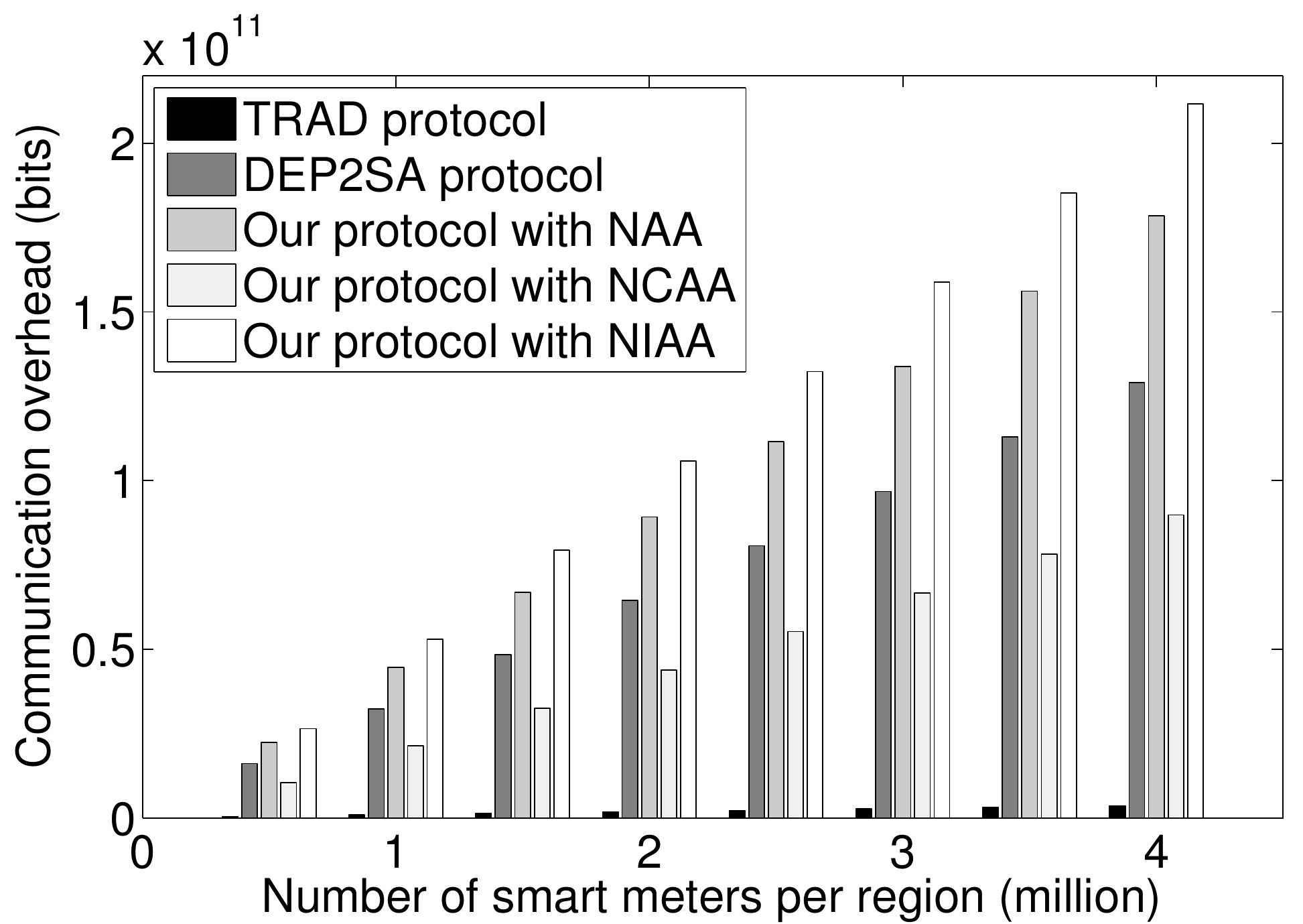}
\caption{The total communication overhead for our protocol.}
\label{fig:total_communication_overhead}
\end{figure}

%%%%%%%%%%%%%%%%%%%%%%%%%%%%%%%%%%%%%%%%%%%%%%%%%%%%%%%

\subsection{Comparison with Existing Aggregation Protocols}

Table~\ref{tab:comparison} gives an overview of the comparison of our protocol with existing (privacy-preserving) smart metering data aggregation protocols in terms of features, functionalities, security and privacy properties as well as performance. Although most of the existing solutions provide a sufficient protection of users' privacy, they do not support functionalities needed for a real-world smart metering architecture. It is clear that our protocol is the only one that, apart from protecting users' privacy, is readily deployable in a real-world smart metering architecture due to the functionalities it supports as well as its low overhead and good scalability.

\section{Conclusions}
\label{Conclusions}
We introduced an MPC-based protocol for aggregating electricity data, related  to consumption and generation, in a secure and privacy-friendly manner. These data are required for operational purposes such as calculating the transmission, generation and balancing fees. Furthermore, we introduced three data aggregation algorithms that offer different security and performance trade-offs. We also analysed the associated computation and communication costs of all our protocols. Our results indicate the feasibility of our protocol for a setting based on a real smart metering architecture. As a future work, we will aim to have a testbed implementation of our protocol in order to test its performance in a real-world environment.

%Furthermore, we introduced 3 different protocols for the recollection of meter information in a privacy-preserving way by using MPC. We reflect on its performance,   including some estimation analysis and its practicality.
%Note The protocols could benefit from a deep analysis on the advantages of parallelization and possible further optimizations on this regard. Furthermore the protocols could be adapted to use in a far extensive way techniques for share conversion, such as ~\cite{CDY05,DSZ15}, such that the number  of communication rounds  can be reduced. This is specially true for  our naive approach. However  we let these  topics to be considered for future works.

%%%%%%%%%%%%%%%%%%%%%%%%%%%%%%%%%%%%%%%%%%%%%%%%%%%%%%%%%%%%%%%%%%%%%%%%%%%%%%%%%%%%%%%%%%%%
%%%%%%%%%%%%%%%%%%%%%%%%%%%%%%%%      ACKNOWLEDGMENT    %%%%%%%%%%%%%%%%%%%%%%%%%%%%%%%%%%%%
%%%%%%%%%%%%%%%%%%%%%%%%%%%%%%%%%%%%%%%%%%%%%%%%%%%%%%%%%%%%%%%%%%%%%%%%%%%%%%%%%%%%%%%%%%%%

%--------------------------------------------------------------------------------------------

\section*{Acknowledgment}

%This work was supported by the Research Council KU Leuven: C16/15/058 and European Commission through KU Leuven BOF OT/13/070 and H2020-DS-2014-653497 PANORAMIX.

This work was supported in part by the Dame Kathleen Ollerenshaw Fellowship awarded by The University of Manchester, by the Research
Council KU Leuven: C16/15/058, by the Flemish
Government through FWO SBO project SNIPPET
S007619, and by H2020-DS-2014-653497 PANORAMIX.

%--------------------------------------------------------------------------------------------

%%%%%%%%%%%%%%%%%%%%%%%%%%%%%%%%%%%%%%%%%%%%%%%%%%%%%%%%%%%%%%%%%%%%%%%%%%%%%%%%%%%%%%%%%%%%
%%%%%%%%%%%%%%%%%%%%%%%%%%%%%%%%      BIBLIOGRAPHY    %%%%%%%%%%%%%%%%%%%%%%%%%%%%%%%%%%%%%%
%%%%%%%%%%%%%%%%%%%%%%%%%%%%%%%%%%%%%%%%%%%%%%%%%%%%%%%%%%%%%%%%%%%%%%%%%%%%%%%%%%%%%%%%%%%%

\bibliographystyle{IEEEtran}
%\bibliography{IEEEabrv,TSGbibliography}{}

% Generated by IEEEtran.bst, version: 1.14 (2015/08/26)

%%%%%%%%%%%%%%%%%%%%%%%%%%%%%%%%%%%%%%%%%%%%%%%%%%%%%%%%%%%%%%%%%%%%%%%%%%%%%%%%%%%%%%%%%%%%
%%%%%%%%%%%%%%%%%%%%%%%%%%%%%%%%%%%%%%%%%%%%%%%%%%%%%%%%%%%%%%%%%%%%%%%%%%%%%%%%%%%%%%%%%%%%
%%%%%%%%%%%%%%%%%%%%%%%%%%%%%%%%%%%%%%%%%%%%%%%%%%%%%%%%%%%%%%%%%%%%%%%%%%%%%%%%%%%%%%%%%%%%
\begin{IEEEbiographynophoto}{Mustafa A. Mustafa} received the B.Sc. degree in communications from the Technical University of Varna, Varna, Bulgaria, in 2007, the M.Sc. degree in communications and signal processing from Newcastle University, Newcastle upon Tyne, U.K., in 2010, and the Ph.D. degree in computer science from The University of Manchester, Manchester, U.K., in 2015. 

He is a Dame Kathleen Ollerenshaw Research Fellow in the School of Computer Science at The University of Manchester. Prior that he was a post-doctoral research fellow with the imec-COSIC research group, Department of Electrical Engineering (ESAT), KU Leuven. His research interests include information security, privacy and applied cryptography in areas such as smart grid, e-health, and IoT.
\end{IEEEbiographynophoto}

\begin{IEEEbiographynophoto}{Sara Cleemput} received the M.Sc. degree in biomedical engineering from KU Leuven, Belgium in 2012 and the Ph.D. degree in electrical engineering from KU Leuven, Belgium in 2018. Her research interests include privacy and security for smart electricity grids.
\end{IEEEbiographynophoto}

\begin{IEEEbiographynophoto}{Abdelrahaman Aly} received an engineer degree in computer science and information systems from the Escuela Poli\'ecnica del Ej\'ercito, Quito, Ecuador, in 2010, and the Ph.D. degree in applied mathematics from the Universit\'e catholique de Louvain, Louvain-la-neuve, Belgium, in 2015. 

He is a postdoctoral research fellow with the imec-COSIC research group, Department of Electrical Engineering (ESAT), KU Leuven. His research interests include secure distributed computation.
\end{IEEEbiographynophoto}

\begin{IEEEbiographynophoto}{Aysajan Abidin} received the B.Sc. degree in computational science from Xinjiang University in China in 2006, the M.Sc. degree in engineering mathematics from Chalmers University of Technology in Sweden in 2007, and the Ph.D. degree in information coding from Link\"oping University in Sweden in 2013. He worked on privacy-preserving biometric authentication as a post-doctoral researcher at Chalmers during 2014--2015. 

Since 2015, he is a post-doctoral research fellow with the imec-COSIC research group, Department of Electrical Engineering (ESAT), KU Leuven. His research interests include information security, privacy, authentication, and applied cryptography. 
\end{IEEEbiographynophoto}

\end{document}